\documentclass[10pt,conference,letter paper]{IEEEtran}

\usepackage[T1]{fontenc}
\usepackage{standalone}
\usepackage{amsfonts,amsmath,amssymb,amsthm}

\usepackage[usenames,dvipsnames]{color}
\usepackage{color}

\usepackage{paralist}
\usepackage{relsize}

\usepackage{epsfig}
\usepackage{wrapfig}
\usepackage{graphicx}
\usepackage{subcaption}

\usepackage{flushend}
\usepackage{hyperref}
\usepackage{url}

\usepackage{newclude}

\usepackage[noadjust]{cite} 
\usepackage{xfrac}

\usepackage{pgf}
\usepackage{tikz}
\usetikzlibrary{positioning}
\usetikzlibrary{shapes, arrows, decorations.markings}
\usetikzlibrary{calc}
\usetikzlibrary{fit}

\newtheorem{definition}{Definition}
\newtheorem{theorem}{Theorem}
\newtheorem{claim}{Claim}
\newtheorem{lemma}{Lemma}
\newtheorem{corollary}{Corollary}

\newtheorem{remark}{Remark}

\newcommand{\blowup}[2]{{#1\mathopen{}\left[{#2}\right]\mathclose{}}}
\newcommand{\subpacket}[2]{{ {#1}^{(#2)} }}
\newcommand{\mais}[1]{{\text{\normalfont\footnotesize\textsf{MAIS}}\mathopen{}\left({#1}\right)\mathclose{}}}
\newcommand{\minrank}{{\normalfont\textsf{minrk}}}
\newcommand{\rank}{{\normalfont\text{rank}}}
\newcommand{\ind}[1]{{\mathcal{I}\mathopen{}\left({#1}\right)\mathclose{}}}
\newcommand{\gnscut}[1]{{\text{\normalfont\footnotesize\textsf{GNSCUT}}\mathopen{}({#1})\mathclose{}}}

\newcommand{\mincut}[1]{{\text{\normalfont\footnotesize\textsf{MINCUT}}\mathopen{}\left({#1}\right)\mathclose{}}}

\newcommand{\betavl}{\beta_{\mathsmaller{\mathsmaller{\textnormal{VL}}}}}
\newcommand{\rcp}{r_{\mathsmaller{\mathsmaller{\textnormal{CP}}}}}
\newcommand{\ratecor}{R^{\mathsmaller{\mathsmaller{\textnormal{CO}}}}}
\newcommand{\rateind}{R^{\mathsmaller{\mathsmaller{\textnormal{MU}}}}}


\begin{document}
\title{On Approximating the Sum-Rate for Multiple-Unicasts}
\author{
\IEEEauthorblockN{Karthikeyan Shanmugam, Megasthenis Asteris and Alexandros G. Dimakis} \\
\IEEEauthorblockA{ Department of Electrical and Computer Engineering \\
University of Texas at Austin \\
\texttt{\{karthish, megas\}@utexas.edu, dimakis@austin.utexas.edu}}
}

\date{\today}

\maketitle

\begin{abstract}

We study upper bounds on the sum-rate of multiple-unicasts.
We approximate the Generalized Network Sharing Bound (GNS cut) of the multiple-unicasts network coding problem with~$k$ independent sources.
Our approximation algorithm runs in polynomial time and yields an upper bound on the joint source entropy rate, which is within an~$O(\log^2 k)$ factor from the GNS cut.
It further yields a vector-linear network code that achieves joint source entropy rate within an~$O(\log^2 k)$ factor from the GNS cut, but \emph{not} with independent sources: the code induces a correlation pattern among the sources. 

Our second contribution is establishing a separation result for vector-linear network codes:
for any given field~$\mathbb{F}$ there exist networks for which the optimum sum-rate supported by vector-linear codes over~$\mathbb{F}$ for independent sources can be multiplicatively separated by a factor of~$k^{1-\delta}$, for any constant ${\delta>0}$, from the optimum joint entropy rate supported by a code that allows correlation between sources. 
Finally, we establish a similar separation result for the asymmetric optimum vector-linear sum-rates achieved over two distinct fields~$\mathbb{F}_{p}$ and $\mathbb{F}_{q}$ for independent sources, revealing that the choice of field can heavily impact the performance of a linear network code. 
\end{abstract}
\begin{IEEEkeywords}
Multiple unicasts, network coding, index coding, GNS-cut, sum-rate.
\end{IEEEkeywords}

\section{Introduction}

The \emph{multiple-unicasts network coding problem} is one of the fundamental problems in network information theory.
In this problem, $k$ source nodes need to communicate independent information to~$k$ corresponding destinations
through a directed acyclic network.
Information is encoded at the sources and flows through links with limited (typically integral) capacity,
while intermediate nodes create (possibly non-linear) combinations of the incoming messages. 
The canonical question is:
\emph{what is the set of transmission rates supported by a given network~$\mathcal{G}$ with~$k$ independent sources?}
A related objective is determining
\emph{the optimum achievable sum-rate}, \textit{i.e.}, the optimum joint source entropy rate for the~$k$ independent sources.
The problem has been extensively studied (see, \textit{e.g.},~\cite{ho2008network, ramakrishnan2010network, meng2012feasibility} and references therein). It is known that \emph{non-linear} codes are required to achieve the capacity~\cite{dougherty2007}, but few papers have studied the question of approximating the rate for multiple unicasts (e.g.~\cite{langberg2011hardness,Lovett}).

A significant body of work has focused on developing upper bounds 
on the joint source entropy rate for multiple-unicasts with independent sources.
Several of these bounds belong to the class of \emph{edge cut} bounds,
in which the sum-rate is upper bounded by the cumulative capacity of a appropriately selected set of network links. 
\emph{Cut set} bounds are a prominent representative of this family,
but they are outperformed by a newer member of this class: the \emph{GNS (Generalized Network Sharing) cut bound}~\cite{kamath2013study}.
There are several other related bounds including the PdE~\cite{kramer2006edge}, Information dominance~\cite{harvey2006capacity} and Functional dependence~\cite{thakor2009network}. 
With few exceptions (GNS cut and Functional dependence bounds are equivalent), it is not known how these bounds compare.
However, all these bounds share one thing in common with the GNS cut: \emph{they are hard to compute}.
In this paper we shed new light on the GNS cut bound and its approximation power for the multiple-unicasts network coding problem. 

\vspace{\baselineskip}
\noindent\textbf{Our Contributions:}
\begin{enumerate}
\item We tensorize the GNS-cut bound as follows: We use an argument that uses strong graph products to obtain a sequence of rate upper bounds that are valid for vector-linear codes -- we show that the weakest bound in this sequence is the GNS cut bound. 
   
\item We define a new communication problem that we call the \emph{relaxed-correlated} multiple-unicasts.
In this problem independence across sources is relaxed: the code designer is allowed to introduce \emph{any} correlation structure in the sources in order to maximize the joint source entropy rate.
 GNS cut is an upper bound on the optimum joint source entropy rate for this relaxed-correlated multiple-unicasts problem. 

\item We develop a polynomial time algorithm to provably approximate the GNS cut bound from above within an~$O(\log^2 k)$ factor, where $k$ is the number of sources in the network. 
Our algorithm also yields a \emph{vector-linear code} for the
relaxed-correlated sources problem achieving joint source entropy rate within an~$O(\log^2 k)$ factor from the optimum over all (even non-linear) network codes. 

\item One important question is how the finite field used by the vector-linear code influences the sum-rate. We show that the choice of the field matters tremendously. 
For \emph{any two fields} $\mathbb{F}_p$ and $\mathbb{F}_q$ and for any $\delta>0$, there exist multiple-unicast networks for large $k$ such that the optimal sum-rates over $\mathbb{F}_p$ and $\mathbb{F}_q$ differ by a factor of $k^{1-\delta}$, for (Theorem~\ref{thm:sep}). 
Note that a $\sfrac{1}{k}$-approximation can be achieved by having a single unicast and ignoring all other sources. 
Our result shows that this kind of separation can almost be caused by a poor choice of field. 
This partially negatively answers an open problem stated recently in~\cite{Lovett}, asking whether vector-linear codes can approximate the network capacity within a logarithmic factor. 
Our result shows that the answer is negative for the sum-rate over a fixed field. 
This relies on a similar result for the symmetric-rates~ (\cite{blasiak2013graph,lubetzky2009nonlinear}).

We also show that any field can be bad for some network: for \emph{any given field} $\mathbb{F}$, 
there exists a multiple-unicasts network (with sufficiently large~$k$) for which the optimum vector-linear joint entropy rates for independent and correlated sources are separated by a factor of $k^{1-\delta}$, for any constant~${\delta >0}$ (Theorem~\ref{thm:sep}).
\end{enumerate}
Note that our results do not rule out the approximation of the optimum sum-rate for multiple-unicasts by linear codes in general. They do imply, however, that the achievability must use a field that depends on the network.
It is possible that the optimal vector-linear code sum-rate (over the best field for that network) is close to the GNS cut bound and the optimum non-linear sum-rate.
Note that previous results imply that this is not possible for the symmetric-rate~\cite{blasiak2013graph}.

Our developments rely on connections between the relaxed-correlated multiple-unicasts network problem and the \emph{index coding problem} established in~\cite{shanmugam2014bounding}.
In the index coding problem, a single broadcasting agent needs to communicate $n$ distinct messages to $n$ receivers (one message per receiver) over a noiseless broadcast channel. 
A subset of the source messages is available as side-information to each receiver. 
The objective is to design a broadcast scheme that uses minimum number of transmissions to deliver the~$n$ messages.
This has been well studied~\cite{bar2011index, birk1998informed, maleki2012index, blasiak2013broadcasting}.
Computing the minimum number of necessary transmissions, even in the case of scalar linear coding schemes, is NP-hard and hard to even approximate within a constant factor~\cite{peeters1996orthogonal, bar2011index}. 
The multiple-unicasts network coding problem can be reduced to an index coding problem when it comes to exact solvability of specific rate tuples~\cite{effros2012equivalence}. 
However, under the connection of~\cite{effros2012equivalence} it is not clear if an approximately 'good' symmetric solution for the reduced index coding problem can be converted to a 'good' network coding solution.

\section{Definitions}
We begin with a set of formal definitions that are useful for our subsequent developments. 
\begin{definition}
\emph{\textbf{(Directed Index Coding)}}
Consider a set of $n$ independent messages (\emph{symbols}) $\mathbf{x}_i \in \mathbb{F}^{p}$, $i = 1, \hdots, n$, each consisting of $p \in \mathbb{N}_{+}$ packets (\emph{subsymbols}) in some alphabet~$\mathbb{F}$, and a set of $n$ users $\lbrace 1, \hdots, n \rbrace$, such that user $i$:
   \begin{inparaenum}
    \item wants message $x_i$, and 
    \item has messages $x_j$, $j \in S_i \subseteq \{1,\hdots, n \}\backslash \lbrace i \rbrace$ as side-information.
    \end{inparaenum}
A sender wishes to broadcast all $n$ messages to the corresponding users over a noiseless channel.
The objective is to design a coding scheme that minimizes the number of transmissions required for all users to decode their respective messages.
$\hfill \lozenge$
\end{definition}


An Index Coding instance is fully characterized by its \textit{side-information graph} $G$.
The side-information graph $G$ is a directed graph on~$n$ vertices corresponding to the~$n$ users. 
An edge $(i,j)$ exists in $G$ if and only if~${j \in S_i}$, \textit{i.e.}, user~$i$ \emph{has} message~$\mathbf{x}_j$ as side-information.


Let 
${\mathbf{x}= \left[\mathbf{x}_1^{\mathsmaller{T}} ~\mathbf{x}_2^{\mathsmaller{T}} \cdots \mathbf{x}_n^{\mathsmaller{T}} \right]^{\mathsmaller{T}}}$ 
be the {$(pn)$-dimensional} vector formed by stacking the~$n$ symbols ${\mathbf{x}_1, \hdots, \mathbf{x}_n \in \mathbb{F}^{p}}$.
The sender transmits one symbol (or equivalently $p$-subsymbols)  per \emph{channel use}. An $\left( \mathbb{F},p,n,r \right)$ \textit{vector-linear index code} for this problem consists of $r$ linear combinations of symbols in $\mathbf{x}$ over a field $\mathbb{F}$ that satisfies the decodability criterion at every user.

\color{black}
         
The \textit{broadcast rate} $\betavl^{\mathbb{F}}\mathopen{}\left(G,\mathcal{C}\right)$ of an $\left( \mathbb{F},p,n,r \right)$-vector-linear index code $\mathcal{C}$ is the ratio $\sfrac{r}{p}$;
\emph{the number of channel uses} required for all users to receive their message.%
\footnote{Recall that a channel use is the transmission of a symbol, or equivalently the transmission of $p$-subsymbols.}



\color{black}

\begin{definition}
   \label{def:mult-unicast-network}
   (\emph{\textbf{Multiple-Unicasts (MU) Network}})
   A \textit{multiple-unicasts network} instance is an acyclic directed network $\mathcal{G}(\mathcal{N}, \mathcal{E})$ 
   on a set $\mathcal{N}$ of nodes, with the following components:
   \begin{enumerate}
   \item $\mathcal{E}$ is the set of links (edges) in the network.
   Links have unit capacity; they carry at most one bit per channel use. 
   We use $c_{(a,b)}$ to denote the total capacity from node~$a$ to node~$b$, \textit{i.e.}, the number of links from $a$ to $b$.
   Finally, $h(e)$ and $t(e)$ denote the head and tail of an edge ${e \in \mathcal{E}}$, respectively.
   \item  (\emph{Source/Destination nodes}) $\mathcal{S}\triangleq\lbrace s_1,s_2, \hdots, s_k\rbrace \subseteq {\cal N}$ is a set of $k$ source nodes, and $\mathcal{T}\triangleq\lbrace t_1,t_2, \hdots, t_k \rbrace \subseteq {\cal N}$ is a set of $k$ destination nodes.
   \item (\emph{Source links})
   ${ \mathcal{E}_i \subset \mathcal{E}}$ is a set of $\mathrm{mincut}(s_i,t_i)$ edges with \emph{no} tail and head~$h(e)=s_i, \; \forall e \in {\cal E}_i$, $i=1,\hdots, k$.
   Here, $\mathrm{mincut}(s_i,t_i)$ is the number of unit-capacity links in the minimum cut between source $s_i$ and destination $t_i$.
      We refer to ${\cal E}_i$ as the set of \emph{source links} of source $s_i$.
   \end{enumerate}
   Each source node $s_i$ wants to transmit information to its corresponding destination $t_i$, ${i=1,\hdots, k}$.
   Information is fed into the network through the source links  $\cup_{i=1}^{k}\mathcal{E}_i$.
   $\hfill \lozenge$
\end{definition}      
The \emph{multiple-unicasts network coding problem} is the problem of designing a \emph{network code}: the set of rules that govern how information is encoded and flows through the network.
One of the canonical objectives of \emph{multiple-unicasts network coding} is to 
maximize the total amount of information transmitted through the network per channel use,
\textit{i.e.}, to maximize the \emph{joint source entropy rate}.
Here, we focus only on \emph{vector-linear codes}, \textit{i.e.},
codes in which encoding and decoding involve only vector-linear operations. 
\begin{definition}
   \label{def:mult-unicast-net-code}
   \emph{(\textbf{Vector Linear MU Network Code})}
   An $(\mathbb{F},p,m,r)$ vector-linear MU network code ${\cal C}$ is a collection of vectors ${\mathbf{z}_e \in \mathbb{F}^p}, \forall {e \in \mathcal{E}}$ that depend on the aggregate source message vector $\mathbf{x} \in \mathbb{F}^r$ (consisting of $r$ independent subsymbols)  satisfying:
    \begin{enumerate}
    \item \emph{Coding at intermediate nodes:} For \emph{source link} $e$, $\mathbf{z}_e$ is a linear combination of sub-symbols in $\mathbf{x}$. For each \emph{non-source} link $e \in {\cal E}$, $\mathbf{z}_e$ is a linear combination of $\mathbf{z}_a$'s of the edges incident on it, i.e. $\{ \mathbf{z}_a \}_{a:h(a)=t(e)}$. 
\item \emph{Decoding at destinations:} At every destination $t_i$, every variable $\mathbf{z}_e$  for $e \in {\cal E}_i$, is linearly decodable from information flowing into $t_i$, i.e. $\{ \mathbf{z}_a \}_{a:t(a)=t_i}$.
 \item \label{req:independent-sources} \emph{Independence between sources:} The variables of one source, i.e. $\{\mathbf{z}_e\}_{e \in {\cal E}_i}$ are mutually independent of those of other sources.
    \end{enumerate}
  \end{definition}  
The joint source entropy rate achieved by such a code is equal to~$\sfrac{r}{p}$ bits per channel use. 
Due to the independence among sources, the joint source entropy rate is equal to the \emph{sum-rate} of the~$k$ sources.
We use $\rateind\mathopen{}\left({\mathcal{G}}; \mathbb{F}\right)$ to denote the 
\emph{optimum sum-rate achievable over all vector-linear network codes defined over the field $\mathbb{F}$},
and $\rateind\mathopen{}\left({\mathcal{G}}\right)$ to denote the optimum vector-linear sum-rate over all fields.

\textbf{Relaxed-Correlated Sources.}
For our developments it is useful to consider a variant of the multiple-unicasts network coding problem, in which the requirement that source information is independent across sources is overlooked. 
We refer to the modified version as the problem of \emph{relaxed-correlated sources}.
In the modified problem, we still seek to maximize the maximum joint source entropy, but allow arbitrary correlations among sources.
\begin{definition}
   \label{def:mult-unicast-net-code-corr}
   \emph{(\textbf{Vector-Linear Relaxed-Correlated MU Network Code})}
   A vector-linear code~$\mathcal{C}$ for the multiple-unicasts network coding problem with \emph{relaxed-correlated sources} is defined as in Def.~\ref{def:mult-unicast-net-code} omitting requirement~\eqref{req:independent-sources}. 
   $\hfill \lozenge$	          
\end{definition}

We use ${\ratecor\mathopen{}\left({\mathcal{G}}; \mathbb{F}\right) = r/p}$ to denote the optimum joint source entropy rate achievable by vector-linear codes over a given field $\mathbb{F}$ in the \emph{relaxed-correlated sources} problem, and $\ratecor\mathopen{}\left({\mathcal{G}}\right)$ to denote the optimum rate over all fields, accordingly. 
Clearly, $\rateind\mathopen{}\left({\mathcal{G}}\right) \le \ratecor\mathopen{} \left({\mathcal{G}}\right)$.

\begin{remark}
We emphasize that the optimal joint source entropy rate $\ratecor \left({\mathcal{G}}\right)$ 
in the relaxed-correlated sources problem is achieved for some (unspecified) source correlation pattern.
The correlation pattern is an additional degree of freedom towards maximizing the joint source entropy rate; it is \emph{not} a code design constraint. 
\end{remark}

\begin{remark}
   In Def.~\ref{def:mult-unicast-network}, we require $\lvert\mathcal{E}_i\rvert=\mincut{s_i,t_i}$.
   This is only a useful convention and does not affect the value of $\rateind(\mathcal{G})$.
   It does, however, affect $\ratecor(\mathcal{G})$.
   In this work, we upper bound~$\rateind(\mathcal{G})$ by developing bounds on~$\ratecor(\mathcal{G})$.
   Hence, the convention becomes essential. 
   \end{remark}


\color{black}

\section{Bounds on the Sum-Rate of an MU Network}  \label{sec:bounduni}            

\newcommand{\Ecard}{m}
\newcommand{\Lcard}{\widehat{m}}
We develop  upper bounds on~$\rateind(\mathcal{G})$, the optimum sum-rate supported by an MU network with independent sources using vector-linear codes. 
In fact, our bounds are developed for~$\ratecor(\mathcal{G})$, 
the optimum vector-linear joint source entropy rate in the relaxed-correlated sources problem.
Our results extend those in~\cite{shanmugam2014bounding}.

\subsection{From Multiple-Unicasts Network Coding to Index Coding}
\label{G-of-calG}
Consider a multiple-unicasts network $\mathcal{G}$ with $k$ sources and $\Ecard$ links.
Let $\mathcal{G}^\prime$ be a directed cyclic network constructed from $\mathcal{G}$ 
by setting 
${t(e) = t_i}$, ${\forall e \in \mathcal{E}_i}$, ${i=1, \hdots, k}$,
\textit{i.e.}, setting the destination node $t_i$ to be the tail of every source link of source~$s_i$.

Let $G$ be the \textit{(reversed)
\footnote{We refer to $G$ as the \emph{reversed} line graph of~$\mathcal{G}$ because the direction of its edges is reversed compared to the typical definition of a line graph.}
line graph} of ${\mathcal{G}}^\prime$,
\textit{i.e.}, a directed graph on $\Ecard$ vertices corresponding to the $m$ links in ${\mathcal{G}}^\prime$,
with a directed edge from vertex $v$ to $\widehat{v}$ corresponding to links $e$ and $\widehat{e}$, respectively, iff  $h(e)=t(\widehat{e})$ in $\mathcal{G}^\prime$.


\color{black}
\begin{theorem}[\cite{shanmugam2014bounding}] 
	\label{thm:duality}
	Consider a multiple-unicasts network~$\mathcal{G}$ with~$\Ecard$ links,
	and a vector-linear code~$\mathcal{C}$ with correlated sources, achieving joint source entropy rate~$r$.
	The dual code $\mathcal{C}^{\perp}$ is a vector-linear index code achieving rate $m-r$  in the index coding instance with side-information graph $G$ constructed based on~$\mathcal{G}$ as described in Section~\ref{G-of-calG}.
\end{theorem}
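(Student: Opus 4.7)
The plan is to realise the dual code $\mathcal{C}^{\perp}$ as the orthogonal complement of $\mathcal{C}$ in the ambient vector space of edge labels, and to read off decodability of the index code directly from the defining constraints of $\mathcal{C}$.

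First, I would stack the per-edge labels into $\mathbf{z} = (\mathbf{z}_e)_{e \in \mathcal{E}} \in \mathbb{F}^{mp}$. Since $\mathcal{C}$ is vector-linear with joint source entropy rate $r$, the set of all achievable $\mathbf{z}$ (as the source vector $\mathbf{x}$ ranges over $\mathbb{F}^{rp}$) is a linear subspace $V \subseteq \mathbb{F}^{mp}$ of dimension $rp$, and its orthogonal complement $V^{\perp}$ has dimension $(m-r)p$. Picking any basis of $V^{\perp}$ as rows of a matrix $M$, the index code $\mathcal{C}^{\perp}$ would broadcast $M\mathbf{y}$, where $\mathbf{y}=(\mathbf{y}_e)_e \in \mathbb{F}^{mp}$ collects the $m$ index-coding messages. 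The broadcast carries $(m-r)p$ subsymbols over $p$ channel uses, yielding rate $m-r$, so it remains only to check decodability.

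Decodability for user $e$ reduces to producing $p$ vectors $\mathbf{w}_{e,1},\dots,\mathbf{w}_{e,p} \in V^{\perp}$ whose supports lie inside $\{e\} \cup S_e$ and whose $e$-blocks form a basis of $\mathbb{F}^p$: user $e$ can then compute each $\mathbf{w}_{e,j}^{T}\mathbf{y}$ from the broadcast, and subtracting the known side-information contributions in the $S_e$-coordinates isolates $\mathbf{y}_e$ componentwise. I would exhibit these vectors directly from Definition~\ref{def:mult-unicast-net-code-corr}: for every edge $e$ the code carries a distinguished matrix equation of the form $\mathbf{z}_e = \sum_{a \in N_e} A_{e,a}\mathbf{z}_a$, namely the coding rule at the tail of $e$ when $e$ is a non-source edge, and the decoding requirement at $t_i$ when $e \in \mathcal{E}_i$; reading off the $p$ scalar equations gives $p$ vectors in $V^{\perp}$, all supported on $\{e\} \cup N_e$ with identity $e$-block.

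The principal obstacle is the bookkeeping that identifies $N_e$ with the side-information neighbourhood $S_e$ coming from the reversed line graph $G$ of $\mathcal{G}'$. This match relies on the construction that sets $t(e)=t_i$ for every source link $e \in \mathcal{E}_i$, which makes the decoding constraint at $t_i$ structurally identical to a coding-style constraint at $t(e)$ and unifies the source and non-source cases. Once one checks that the reversed-line-graph adjacency rule $(v,\hat{v})\in G \iff h(e)=t(\hat{e})$ exactly amounts to ``$\hat{e}$ is incident to $t(e)$ in $\mathcal{G}'$,'' the decoding argument above applies uniformly to every edge and the claim that $\mathcal{C}^{\perp}$ achieves rate $m-r$ follows.
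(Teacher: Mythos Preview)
The paper does not prove this theorem; it is imported from~\cite{shanmugam2014bounding}, so there is no in-paper argument to compare against. Your orthogonal-complement approach is the standard route for this duality and the outline is sound through the penultimate paragraph: $V$ has dimension $rp$, $V^{\perp}$ has dimension $(m-r)p$, and each local constraint $\mathbf{z}_e = \sum_{a \in N_e} A_{e,a}\,\mathbf{z}_a$ supplies $p$ vectors in $V^{\perp}$ with identity $e$-block and support $\{e\}\cup N_e$, where $N_e = \{a : h(a)=t(e)\text{ in }\mathcal{G}'\}$.

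The gap is in the final check that $N_e \subseteq S_e$. You write that the rule ``$(v,\hat v)\in G \iff h(e)=t(\hat e)$'' amounts to ``$\hat e$ is incident to $t(e)$ in $\mathcal{G}'$,'' but $h(e)=t(\hat e)$ says that $\hat e$ \emph{leaves} $h(e)$: with the adjacency rule as literally stated one gets $S_e=\{\hat e : t(\hat e)=h(e)\}$, the set of edges \emph{outgoing from} $h(e)$, whereas your decoding certificates are supported on the edges \emph{incoming to} $t(e)$. These two sets differ in general, so the decodability argument as written does not close. What your argument actually needs is the orientation $(e,a)\in E(G)\iff h(a)=t(e)$, i.e., user $e$'s side-information consists of the edges feeding the tail of $e$ in $\mathcal{G}'$; that is the orientation used in the cited reduction and is what the footnote's ``reversed'' is presumably pointing at, even though the displayed condition $h(e)=t(\hat e)$ is the ordinary line-graph rule. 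You should pin down and verify this orientation explicitly rather than quoting the formula, since it is precisely the hinge on which the whole decodability claim turns.
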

\begin{corollary}
	\label{cor:duality}
	 If $G$ is the directed graph constructed based on the network~$\mathcal{G}$ as described in Section~\ref{G-of-calG}, then
	\begin{align}   
		\ratecor\mathopen{}\left( \mathcal{G} \right)
		\;=\; 
		\Ecard - \betavl\mathopen{}\left( G \right).
		\nonumber
	\end{align}
\end{corollary}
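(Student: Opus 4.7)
The plan is to derive the corollary as a direct consequence of Theorem~\ref{thm:duality}, by applying the duality in both directions and taking suprema over achievable rates.

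First I would establish the inequality $\ratecor(\mathcal{G}) \le m - \betavl(G)$. Fix any $\varepsilon > 0$ and pick a vector-linear network code $\mathcal{C}$ for the relaxed-correlated multiple-unicasts problem on $\mathcal{G}$ achieving joint source entropy rate $r \ge \ratecor(\mathcal{G}) - \varepsilon$. By Theorem~\ref{thm:duality}, the dual code $\mathcal{C}^{\perp}$ is a vector-linear index code for the side-information graph $G$ with broadcast rate $m - r$. Since the broadcast rate $\betavl(G)$ is, by definition, the infimum of achievable rates over all vector-linear index codes on $G$, we get $\betavl(G) \le m - r \le m - \ratecor(\mathcal{G}) + \varepsilon$, and sending $\varepsilon \to 0$ gives the desired inequality.

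Next I would argue the reverse inequality $\ratecor(\mathcal{G}) \ge m - \betavl(G)$. The natural approach is to apply the duality ``in reverse'': given an index code achieving rate close to $\betavl(G)$, construct a vector-linear network code for $\mathcal{G}$ with correlated sources achieving the complementary rate. Concretely, for any $\varepsilon > 0$ take an $(\mathbb{F},p,n,r')$ vector-linear index code on $G$ with $r'/p \le \betavl(G) + \varepsilon$, and consider its dual. The content of~\cite{shanmugam2014bounding} gives a bijective correspondence: the dualization operation $\mathcal{C} \mapsto \mathcal{C}^{\perp}$ is an involution (up to equivalence) that swaps the roles of coding constraints at intermediate network nodes and decoding constraints at index-coding receivers, with the total number of edges $m$ being invariant. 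Applying this inverse direction of Theorem~\ref{thm:duality} produces a vector-linear relaxed-correlated MU code on $\mathcal{G}$ with joint source entropy rate $m - r'/p \ge m - \betavl(G) - \varepsilon$. Taking $\varepsilon \to 0$ yields $\ratecor(\mathcal{G}) \ge m - \betavl(G)$.

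Combining both inequalities gives $\ratecor(\mathcal{G}) = m - \betavl(G)$, completing the proof. The only real subtlety is the converse direction: one must verify that the duality asserted in Theorem~\ref{thm:duality} is not merely a one-way reduction (network code $\Rightarrow$ index code) but actually an equivalence, so that index codes on $G$ can be ``dualized back'' into legitimate relaxed-correlated MU codes on $\mathcal{G}$ with the rate identity $r + \text{(index rate)} = m$. This is the step I would spend the most care on, appealing to the explicit construction in~\cite{shanmugam2014bounding}; it is essentially bookkeeping about linear subspaces, but the counting of dimensions (where the conservation law $r + (m-r) = m$ comes from) is what drives the equality rather than just an inequality.
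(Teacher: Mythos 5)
Your proof is correct and is essentially the argument the paper implicitly relies on (the corollary is stated without a separate proof). The one-direction inequality $\ratecor(\mathcal{G}) \le m - \betavl(G)$ follows immediately from Theorem~\ref{thm:duality} by dualizing a near-optimal relaxed-correlated network code and using that $\betavl$ is the optimum (infimum) over all vector-linear index codes; your $\varepsilon$-argument handles the fact that an exact optimizer need not exist. You have also correctly put your finger on the real content: as literally stated, Theorem~\ref{thm:duality} is a one-way map (network code $\Rightarrow$ index code), so the reverse inequality $\ratecor(\mathcal{G}) \ge m - \betavl(G)$ requires the converse direction of the duality from~\cite{shanmugam2014bounding}, namely that dualizing a vector-linear index code on $G$ yields a legitimate vector-linear relaxed-correlated MU code on $\mathcal{G}$ with the complementary rate $m - r'/p$. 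That bidirectional correspondence is what turns the two inequalities into the stated equality, and it is exactly the degree of freedom gained by allowing arbitrary source correlation (which is why the corollary holds for $\ratecor$ rather than $\rateind$). So this is not a different route from the paper; it is the intended route, made explicit about the one subtlety the paper glosses over.
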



We exploit the connection established in Cor.~\ref{cor:duality}
to develop upper bounds on the joint source entropy rate $\ratecor\mathopen{}\left( \mathcal{G} \right)$,
through properties of the side information graph~$G$ of the associated index coding problem.
\begin{definition}
   \label{def:mais}
   $\mais{G}$ of a directed graph $G$ is the cardinality of the largest set ${\widehat{V} \subseteq V(G)}$
   such that the subgraph of $G$ induced by~$\widehat{V}$ is acyclic.
\end{definition}
It is known that the size of the maximum acyclic subgraph of $G$ is a lower bound on $\betavl\mathopen{}\left( G \right)$. Tighter bounds can be obtained via graph tensorization.
\begin{lemma}
	\label{lem:beta_VL_gt_mais_q}
	The optimum broadcast rate ${\betavl(G)}$ of an index coding instance with side-information graph $G$, satisfies
	\begin{align} 
		\sqrt[q]{\mais{\otimes^{q}G}}
		\;\le\;
		\betavl\mathopen{}\left( G \right),
		\quad \forall q\in \mathbb{Z}^{+},\nonumber
	\end{align} 
	where $\otimes^q$ denotes the strong product of $G$ with itself $q$ times.
\end{lemma}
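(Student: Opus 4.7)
The plan is to combine two ingredients, applied to the chain
\[
\mais{\otimes^{q}G} \;\le\; \betavl(\otimes^{q}G) \;\le\; \betavl(G)^{q},
\]
and then take $q$-th roots. The left inequality is a general ``MAIS lower bound'' for vector-linear index codes, and the right inequality is the submultiplicativity of $\betavl$ under the strong product.

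For the base inequality $\mais{H}\le \betavl(H)$ applied to an arbitrary side-information graph $H$, I would argue directly with encoding vectors. Represent an $(\mathbb{F},p,n,r)$ vector-linear code for $H$ by its encoding vectors $L_{u,j}\in \mathbb{F}^{r}$. Let $V'\subseteq V(H)$ induce an acyclic subgraph and order its $a$ vertices as $v_{1},\ldots,v_{a}$ so that each edge within $V'$ goes $v_{i}\to v_{j}$ with $j<i$; equivalently, $v_{i}$'s side information inside $V'$ lies in $\{v_{1},\ldots,v_{i-1}\}$. Decodability at $v_{i}$ forces $\{L_{v_{i},j}\}_{j}$ to be linearly independent modulo the span of all $L_{k,\cdot}$ with $k\notin \bar{S}_{v_{i}}:=S_{v_{i}}\cup\{v_{i}\}$, which in particular includes every $L_{v_{\ell},\cdot}$ with $\ell>i$. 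Given any dependence $\sum_{i,j}c_{i,j}L_{v_{i},j}=0$, selecting the smallest $i$ with some $c_{i,j}\ne 0$ and applying the modular independence at $v_{i}$ yields a contradiction. Thus the $ap$ vectors $\{L_{v_{i},j}\}$ are independent in $\mathbb{F}^{r}$, giving $\betavl(H)\ge a$; maximizing over $V'$ yields the claim.

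For submultiplicativity, I would lift a code for $G$ to $\otimes^{q}G$ by tensoring encoding vectors. Given an $(\mathbb{F},p,n,r)$ code with vectors $\{L_{u,j}\}$, define for $\otimes^{q}G$
\[
L^{(q)}_{(u_{1},\ldots,u_{q}),(j_{1},\ldots,j_{q})} \;=\; L_{u_{1},j_{1}}\otimes\cdots\otimes L_{u_{q},j_{q}} \;\in\; (\mathbb{F}^{r})^{\otimes q},
\]
producing a code with $n^{q}$ messages, $p^{q}$ subsymbols each, and $r^{q}$ transmissions. To verify decodability at user $(u_{1}^{*},\ldots,u_{q}^{*})$, let $V_{i}$ be the span of all $L_{v,k}$ with $v\notin \bar{S}_{u_{i}^{*}}$; single-copy decodability ensures that $\{L_{u_{i}^{*},j}+V_{i}\}_{j}$ are linearly independent in $\mathbb{F}^{r}/V_{i}$. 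The strong product has the decisive property $\bar{S}_{(u_{1}^{*},\ldots,u_{q}^{*})}=\bar{S}_{u_{1}^{*}}\times\cdots\times\bar{S}_{u_{q}^{*}}$, so every ``forbidden'' tensor (one with some $v_{i}\notin \bar{S}_{u_{i}^{*}}$) vanishes in the tensor product of quotients $(\mathbb{F}^{r}/V_{1})\otimes\cdots\otimes(\mathbb{F}^{r}/V_{q})$. Independence of simple tensors of linearly independent factors then shows that the target encoding vectors remain independent modulo the forbidden span, which is exactly the required decodability condition. Infimizing over codes for $G$ yields $\betavl(\otimes^{q}G)\le \betavl(G)^{q}$.

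The main obstacle is the submultiplicativity step: it rests entirely on the product identity $\bar{S}_{(u_{1}^{*},\ldots,u_{q}^{*})} = \prod_{i}\bar{S}_{u_{i}^{*}}$, which is specific to the strong product and fails for, e.g., the tensor or Cartesian graph products. Once this identity is used to route through the tensor product of quotient spaces, the rest is routine tensor-algebra bookkeeping.
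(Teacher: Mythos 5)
Your proof is correct, and it takes a genuinely different route from the paper's. The paper proves this lemma via the minrank characterization of vector-linear broadcast rate from \cite{blasiak2013graph} (Theorem~\ref{vectorlinear}: $\betavl^{\mathbb{F}}(G) = \inf_k \minrank_{\mathbb{F}}(\blowup{G}{k})/k$), and then chains (i) submultiplicativity of $\minrank$, (ii) the bound $\minrank \ge \mais{\cdot}$, and (iii) combinatorial interchange lemmas relating strong products and blowups (Lemma~\ref{lemma:subgraph}, Corollary~\ref{corollary:indep-set-mais-inequalities}, Lemma~\ref{alphalemma}) to prove Lemma~\ref{lemma:blowup-minrnk-gt-mais-product}, finally taking the infimum over $k$ and over fields. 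You instead bypass minrank and blowups entirely, working directly with encoding vectors: you re-derive the MAIS lower bound $\mais{H}\le\betavl(H)$ from the modular-independence characterization of decodability (by ordering an acyclic vertex set topologically and taking the minimal index in a putative dependence), and you prove submultiplicativity $\betavl(\otimes^{q}G)\le\betavl(G)^{q}$ by tensoring encoding matrices, verifying decodability through the tensor product of quotient spaces $\mathbb{F}^{r}/V_{i}$ and the identity $\bar{S}_{(u_1,\ldots,u_q)}=\prod_i\bar{S}_{u_i}$ (which you correctly flag as specific to the strong product). Your argument is more self-contained and first-principles. The paper's route is more modular and, importantly, shares its infrastructure: Lemma~\ref{lemma:blowup-minrnk-gt-mais-product} and the blowup lemmas are reused in the proof of the uncertainty principle (Lemma~\ref{lem:bvl-uncertainty}), which is essential for Theorem~\ref{thm:sep}, so the paper's blowup/minrank machinery is not dead weight. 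Both proofs are sound; one minor presentational note is that you should state explicitly that both your ingredients hold field-by-field and the final bound follows by taking the infimum over $\mathbb{F}$.
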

\begin{IEEEproof}
 See proof of Theorem~\ref{thm:shannoncapbound} in the Appendix.
\end{IEEEproof}

\begin{theorem}
\label{thm:uni}
   Consider a multiple-unicasts network $\mathcal{G}$ with~$k$ sources and~$\Ecard$ links.
   Further, let $G$ be the digraph on $\Ecard$ vertices obtained from $\mathcal{G}$ as described in Section~\ref{G-of-calG}.
   Then,
   \begin{align}   
      \rateind\mathopen{}\left( \mathcal{G} \right)
      \,
      \le
      \,
      \ratecor\mathopen{}\left( \mathcal{G} \right)
      &=
      \Ecard - \betavl\mathopen{}\left( G \right) \nonumber \\
      &\le
      \;
      \Ecard - \sqrt[q]{\mais{\otimes^{q}G}},
      \quad
      q\in \mathbb{Z}^{+}.\nonumber
   \end{align}
\end{theorem}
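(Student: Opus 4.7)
The plan is to assemble the theorem from three ingredients already at hand in the excerpt; none of them requires additional technical work, because the preceding sections have done all of the heavy lifting. I would present the proof as a one-line chain of inequalities and an equality, then justify each link by pointing at the appropriate prior result.

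First, the leftmost inequality $\rateind(\mathcal{G}) \le \ratecor(\mathcal{G})$ is immediate from Definitions~\ref{def:mult-unicast-net-code} and~\ref{def:mult-unicast-net-code-corr}: any vector-linear MU network code for independent sources is in particular a feasible code in the relaxed-correlated sources problem (it satisfies requirement~\eqref{req:independent-sources} as an extra constraint on top of the coding and decoding conditions). Hence the optimum joint source entropy rate over the smaller feasible set cannot exceed the optimum over the larger one. Second, the central equality $\ratecor(\mathcal{G}) = \Ecard - \betavl(G)$ is exactly Corollary~\ref{cor:duality}, which itself is nothing more than a reformulation of the code-level duality in Theorem~\ref{thm:duality}: any vector-linear relaxed-correlated MU code of joint source entropy rate $r$ yields, via the dual construction $\mathcal{C}^{\perp}$, a vector-linear index code of rate $m-r$ on the reversed line graph $G$, and vice versa, so the two optima are related by $r \mapsto \Ecard - r$.

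Third, the final inequality $\Ecard - \betavl(G) \le \Ecard - \sqrt[q]{\mais{\otimes^{q}G}}$ follows by applying Lemma~\ref{lem:beta_VL_gt_mais_q}, which gives $\sqrt[q]{\mais{\otimes^{q}G}} \le \betavl(G)$ for every positive integer $q$, and then subtracting both sides from the fixed quantity $\Ecard$ (which reverses the direction of the inequality). Since $q$ was arbitrary, the bound holds for every $q \in \mathbb{Z}^{+}$, yielding the advertised family of upper bounds.

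I do not anticipate any real obstacle. The only thing I would double-check before writing is a bookkeeping point: that the $\Ecard$ appearing in Corollary~\ref{cor:duality} matches the $\Ecard$ appearing in Lemma~\ref{lem:beta_VL_gt_mais_q}, \emph{i.e.}, that the side-information graph $G$ has exactly $\Ecard$ vertices, one per link of $\mathcal{G}$. This is true by construction in Section~\ref{G-of-calG}, since $\mathcal{G}^{\prime}$ is obtained from $\mathcal{G}$ only by redirecting the tails of source links and has the same edge set. With this sanity check the three facts chain together verbatim, and the theorem follows.
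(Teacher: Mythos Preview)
Your proposal is correct and matches the paper's own proof, which simply states that the result follows from Theorem~\ref{thm:duality} (equivalently Corollary~\ref{cor:duality}) and Lemma~\ref{lem:beta_VL_gt_mais_q}. Your write-up is in fact more detailed than the paper's one-line justification, but the ingredients and the way they are chained together are identical.
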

\begin{IEEEproof}
The proof follows from Thm.~\ref{thm:duality} and Lem.~\ref{lem:beta_VL_gt_mais_q}.
\end{IEEEproof}
\color{black}

\subsection{Comparison with GNS cut bound}

We compare the bounds of Theorem~\ref{thm:uni}
with the GNS cut bound;
the weakest among the former (\textit{i.e.}, for ${q=1}$) is at least as good as the latter.
A more careful application of the GNS cut approach (on a slightly modified network) 
reveals that the two bounds are effectively equal.
First, recall the definition of the GNS cut:


\begin{definition}[\cite{kamath2013study}]
	\label{definition-gnscut}
   A GNS cut of a multiple-unicasts network~${\mathcal{G}(\mathcal{V}, \mathcal{E})}$ with $k$ sources, is a subset~${S \subset {\cal E}}$ such that 
   for $\mathcal{G}-S$ (\textit{i.e.}, the network obtained by removing the links in $S$ from $\mathcal{G}$) the following holds:
   there exists a permutation ${\pi:[k]\rightarrow [k]}$ such that
   ${\forall i,j \in [k]}$, if $\pi(i) \geq \pi(j)$, then no path exists from source $s_i$ to destination $t_j$.
\end{definition}

The size of the smallest (in terms of capacity) GNS cut, denoted by $\gnscut{\mathcal{G}}$,
is an upper bound on the \emph{non-linear} sum-rate of the multiple-unicasts problem with independent sources~\cite{kamath2013study}.
\begin{theorem}
   \label{thm:m-minus-mais-lt-gnscut}
   Consider a multiple-unicasts network~$\mathcal{G}$ with~$k$ sources and~$\Ecard$ links.
   Let~$G$ be a digraph on $\Ecard$ vertices constructed based on~$\mathcal{G}$ as described in Section~\ref{G-of-calG}.
   Then,
   \begin{align}
      \Ecard-\mais{G} \;\leq\; \gnscut{\mathcal{G}}. \nonumber
   \end{align}
\end{theorem}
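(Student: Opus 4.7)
The plan is to prove the inequality by showing that any GNS cut certifies a large acyclic induced subgraph of $G$. Specifically, given a minimum GNS cut $S^{\star}\subseteq \mathcal{E}$ realizing $\gnscut{\mathcal{G}}$, together with its associated permutation $\pi$, I would argue that the vertex set $V(G)\setminus S^{\star}$ of size $m-|S^{\star}|$ induces an acyclic subgraph of $G$. This immediately gives $\mais{G}\geq m-\gnscut{\mathcal{G}}$, which is exactly the desired inequality rearranged.

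To establish the acyclicity claim, I would proceed by contradiction and suppose $V(G)\setminus S^{\star}$ contains a directed cycle $e_{1}\to e_{2}\to \cdots \to e_{L}\to e_{1}$ in $G$. By the construction of $G$ as the reversed line graph of $\mathcal{G}'$, this cycle corresponds to a sequence of edges in $\mathcal{G}'$ satisfying $h(e_{l})=t(e_{l+1})$ cyclically, all lying in $\mathcal{E}\setminus S^{\star}$. Because $\mathcal{G}$ itself is acyclic and $\mathcal{G}'$ differs from $\mathcal{G}$ only in that source links have had their tail redirected to the corresponding destination, the cycle must pass through at least one source link. Let $e_{j_{1}},\dots ,e_{j_{K}}$ denote the source links appearing along the cycle in cyclic order, with $e_{j_{l}}\in \mathcal{E}_{i_{l}}$.

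The key step is that the segment of the cycle strictly between $e_{j_{l}}$ and $e_{j_{l+1}}$ consists of non-source edges of $\mathcal{G}$, whose tails and heads agree in $\mathcal{G}$ and $\mathcal{G}'$; it therefore forms a walk, and hence a path, in $\mathcal{G}\setminus S^{\star}$ from $h(e_{j_{l}})=s_{i_{l}}$ to $t(e_{j_{l+1}})=t_{i_{l+1}}$. The GNS property forbids such an $s_{i_{l}}$-to-$t_{i_{l+1}}$ path whenever $\pi(i_{l})\geq \pi(i_{l+1})$. If $K=1$, the single source link closes the cycle through an $s_{i_{1}}$-to-$t_{i_{1}}$ path in $\mathcal{G}\setminus S^{\star}$, which contradicts GNS at $i=j=i_{1}$ (since trivially $\pi(i_{1})\geq \pi(i_{1})$). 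If $K\geq 2$, for every $l$ we must have $\pi(i_{l})<\pi(i_{l+1})$, yielding the impossible chain $\pi(i_{1})<\pi(i_{2})<\cdots <\pi(i_{K})<\pi(i_{1})$. Either way no cycle exists.

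The main care I expect to need is in cleanly justifying that each between-source-link segment of the postulated cycle consists purely of non-source edges (so that it really lives in $\mathcal{G}\setminus S^{\star}$), and in invoking the reflexive ($i=j$) consequence of the GNS definition to handle the $K=1$ case. Everything else is a single contradiction via the permutation.
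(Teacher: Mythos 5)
Your proposal is correct and takes essentially the same route as the paper's proof: both establish that a GNS cut is a feedback edge set of $\mathcal{G}'$ (equivalently, a feedback vertex set of $G$) via a contradiction argument built on the permutation $\pi$ from Definition~\ref{definition-gnscut}. Your cyclic chain $\pi(i_1) < \pi(i_2) < \cdots < \pi(i_K) < \pi(i_1)$ is a slightly cleaner packaging of what the paper achieves extremally by choosing $i^{\star}$ maximizing $\pi$ among sources whose source link lies on a cycle and tracing to the next source link $j^{\star}$ on that cycle.
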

\begin{IEEEproof}
   See Appendix, Section~\ref{sec:proof:m-minus-mais-lt-gnscut}.
\end{IEEEproof}
\noindent In other words, the bounds of Theorem~\ref{thm:uni} are at least as tight as the GNS cut bound. 

The GNS cut technique can be slightly strengthened to yield an upper bound exactly equal to $m-\mais{G}$.
We achieve that by obtaining the GNS cut bound on a modified, yet equivalent network.
Given a multiple-unicasts network~$\mathcal{G}(\mathcal{V}, \mathcal{E})$ with~$k$ sources and~$\Ecard$ links,
consider a network~$\widetilde{{\mathcal{G}}}(\widetilde{\mathcal{V}}, \widetilde{\mathcal{E}})$ obtained from $\mathcal{G}$ as follows:%

\begin{enumerate}
	\item Introduce $k$ nodes ${\widetilde{s}_1, \hdots, \widetilde{s}_k}$ to $\mathcal{G}$, \textit{i.e.}, ${\widetilde{\mathcal{V}} = \mathcal{V} \cup \lbrace \widetilde{s}_i\rbrace}_{i=1}^{k}$.
	\item Set $t(e)=\widetilde{s}_i,\; \forall e \in \mathcal{E}_i$, $i =1,\hdots, k$, that is, set $\widetilde{s}_i$ as the tail of all source links of source $s_i$.
	\item Introduce a set $\widetilde{\mathcal{E}}_i$ of  $|\mathcal{E}_i|$ new links with head $\widetilde{s}_i$ and no tail, for all ${i \in \lbrace 1, \hdots, k\rbrace}$.
\end{enumerate}
The modified network $\widetilde{{\mathcal{G}}}$ is a multiple-unicasts network with~$k$ sources $\widetilde{s}_1, \hdots, \widetilde{s}_k$ and respective destinations $t_1, \hdots, t_k$.
One can verify that ${\ratecor(\widetilde{\mathcal{G}})=\ratecor(\mathcal{G})}$.
The key difference is that the $|\mathcal{E}_i|$ source links of source $s_{i}$ in $\mathcal{G}$ have become regular links in~$\widetilde{\mathcal{G}}$ and can be used in a GNS cut.
Thus, the bound obtained on the modified network is potentially tighter, \textit{i.e.}, $\gnscut{\widetilde{\mathcal{G}}} \le \gnscut{{\mathcal{G}}}$.

\begin{theorem}
\label{thm:m-minus-mais-eq-gnscut-Gtilde}
Consider a multiple-unicasts network $\mathcal{G}$ with~$k$ sources and $\Ecard$ links.
Let $G$ be the digraph on $\Ecard$ vertices obtained from $\mathcal{G}$ as described in Section~\ref{G-of-calG}, and $\widetilde{{\mathcal{G}}}$ the modified network constructed as described above.
Then, any feasible feedback vertex set of~$G$ corresponds to a GNS cut in~$\widetilde{{\cal G}}$ with the same capacity.
In turn,
   \begin{align}
      \Ecard-\mais{G} = \gnscut{\widetilde{\mathcal{G}}}. \nonumber
   \end{align}
\end{theorem}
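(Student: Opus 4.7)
The plan is to prove the equality by establishing explicit constructions in both directions between feedback vertex sets of $G$ and GNS cuts of $\widetilde{\mathcal{G}}$. The forward direction (the one highlighted in the theorem statement) will give $\gnscut{\widetilde{\mathcal{G}}} \le m - \mais{G}$: every FVS of $G$ turns into a GNS cut of $\widetilde{\mathcal{G}}$ of the same capacity. The reverse direction will give $\gnscut{\widetilde{\mathcal{G}}} \ge m - \mais{G}$: every GNS cut of $\widetilde{\mathcal{G}}$, once virtual source links are discarded, is itself an FVS of $G$. Together they yield the stated equality.

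For the forward construction, let $F \subseteq V(G) = \mathcal{E}$ be an FVS. By the (reversed) line-graph correspondence, $G - F$ being acyclic is equivalent to the sub-digraph of $\mathcal{G}'$ with edge set $\mathcal{E} \setminus F$ being an acyclic digraph. Topologically sort the nodes of $\mathcal{G}'$ (which coincide with those of $\mathcal{G}$) via some $\sigma:\mathcal{V} \to \mathbb{Z}$ so that every surviving edge respects $\sigma$. Each surviving source link $e \in \mathcal{E}_i \setminus F$ appears in $\mathcal{G}'$ as the rerouted edge $t_i \to s_i$, forcing $\sigma(t_i) < \sigma(s_i)$. Define $\pi:[k] \to [k]$ by sorting sources in increasing order of $\sigma(t_i)$ with arbitrary tie-breaking. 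A hypothetical path from $\widetilde{s}_i$ to $t_j$ in $\widetilde{\mathcal{G}} - F$ would have to begin with some source link $e \in \mathcal{E}_i \setminus F$ (traversing $\widetilde{s}_i \to s_i$) and then use only non-source-link edges, since the remaining source links originate from other $\widetilde{s}_{i'}$ vertices that are unreachable from intermediate network nodes. Viewed in $\mathcal{G}' - F$, this tail segment is an $s_i \to t_j$ path, so $\sigma(t_i) < \sigma(s_i) \le \sigma(t_j)$, contradicting $\pi(i) \ge \pi(j)$. Hence $F$ certifies a GNS cut of $\widetilde{\mathcal{G}}$ of capacity $|F|$.

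For the reverse construction, take a GNS cut $S$ of $\widetilde{\mathcal{G}}$ with certifying permutation $\pi$. The virtual links $\widetilde{\mathcal{E}}_i$ are tailless and hence cannot lie on any directed path, so dropping them from $S$ preserves the GNS property and only shrinks the cut; we may assume $S \subseteq \mathcal{E} = V(G)$. To show $S$ is an FVS, suppose to the contrary that $\mathcal{G}' - S$ contains a simple directed cycle. Acyclicity of $\mathcal{G}$ forces this cycle to traverse at least one rerouted source link; let these be $t_{i_l} \to s_{i_l}$ for $l=1,\ldots,r$ in cyclic order. The stretch of the cycle between rerouted source links $l$ and $l+1$ is a segment $s_{i_l} \to \cdots \to t_{i_{l+1}}$ (indices mod $r$) using only non-source-link edges, hence a genuine $s_{i_l} \to t_{i_{l+1}}$ path in $\mathcal{G} - S$. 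Since $t_{i_l} \to s_{i_l}$ lies in $\mathcal{G}' - S$, some source link in $\mathcal{E}_{i_l} \setminus S$ survives and appears in $\widetilde{\mathcal{G}} - S$ as $\widetilde{s}_{i_l} \to s_{i_l}$; prepending it to the segment starting at $s_{i_l}$ yields a path $\widetilde{s}_{i_l} \to t_{i_{l+1}}$ in $\widetilde{\mathcal{G}} - S$. The GNS property then forces $\pi(i_l) < \pi(i_{l+1})$ for every $l$, producing the cyclic contradiction $\pi(i_1) < \pi(i_2) < \cdots < \pi(i_r) < \pi(i_1)$. The chief obstacle will be the careful bookkeeping across the three networks $\mathcal{G}$, $\mathcal{G}'$, and $\widetilde{\mathcal{G}}$: acyclicity of $G - F$ is most naturally read off in $\mathcal{G}'$ via the line graph, while GNS properties must be verified in $\widetilde{\mathcal{G}}$, with the bridge being that $\widetilde{\mathcal{G}}$-paths out of $\widetilde{s}_i$ are exactly a surviving source link of $\mathcal{E}_i$ prepended to a $\mathcal{G}$-path from $s_i$, and rerouted source links in $\mathcal{G}'$ encode precisely the backward reachability that a GNS ordering must rule out.
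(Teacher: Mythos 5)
Your proof is correct, and the overall plan matches the paper's: establish a capacity-preserving correspondence between feedback vertex sets of $G$ (equivalently, feedback edge sets of $\mathcal{G}'$) and GNS cuts of $\widetilde{\mathcal{G}}$ in both directions, noting that the virtual tailless links $\widetilde{\mathcal{E}}_i$ can be safely ignored. The one place where you deviate is the forward direction. You construct the GNS-certifying permutation $\pi$ directly by topologically sorting the nodes of the acyclic digraph $\mathcal{G}'-F$ and then ordering sources by $\sigma(t_i)$, observing that any surviving path $\widetilde{s}_i \to t_j$ would force $\sigma(t_i) < \sigma(s_i) \le \sigma(t_j)$, hence $\pi(i) < \pi(j)$. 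The paper instead factors this step into a standalone Claim (if $\mathcal{F}$ is not a GNS cut, there is either a direct $s_i \to t_i$ path or a cyclic sequence of $s_{i_\ell}\to t_{i_{\ell+1}}$ reachabilities), proved via a topological ordering of an auxiliary conflict digraph $H$ on $[k]$. These are technically equivalent---the restriction of your $\sigma$ to $\{t_1,\dots,t_k\}$ induces a topological order of $H$---but your route is more concrete and self-contained for this theorem, while the paper's Claim isolates the contrapositive into an independently quotable lemma. Your reverse direction, extracting the cyclic chain $\pi(i_1)<\dots<\pi(i_r)<\pi(i_1)$ from a cycle in $\mathcal{G}'-S$, is essentially identical to the paper's argument; the $r=1$ case (a path $\widetilde{s}_{i_1}\to t_{i_1}$) is handled correctly since the definition of a GNS cut also forbids $\widetilde{s}_i \to t_i$ paths.
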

\begin{IEEEproof}
The proof relies on showing that each GNS cut in $\widetilde{\mathcal{G}}$ corresponds to a \emph{Feedback Vertex Set} (FVS) of equal size in the digraph $G$.
A FVS $\mathcal{F}$ is a set of vertices 
such that the subgraph of $G$ induced by ${V(G)-\mathcal{F}}$ is acyclic.
By definition, 
$\Ecard-\mais{G}$ is the cardinality of the \emph{minimum feedback vertex set} in $G$.

See Appendix, Section~\ref{sec:proof::m-minus-mais-eq-gnscut-Gtilde} for the complete proof.   
\end{IEEEproof}


\begin{remark}
   $\gnscut{\mathcal{G}}$ is an upper bound on the \emph{non-linear} sum-rate of the multiple-unicasts network coding problem on $\mathcal{G}$ with independent sources (Thm.~$1$ in Chapter $2$ of~\cite{kamath2013study}, Thm.~$2$ in~\cite{kamath2011generalized}).
   Those results can be generalized to show that $\gnscut{\mathcal{G}}$ also upper bounds the non-linear joint source entropy rate in the problem of relaxed-correlated sources.
   By Theorem~\ref{thm:m-minus-mais-eq-gnscut-Gtilde}, it follows that ${\Ecard-\mais{G}}$ is also an upper bound on the optimum \emph{non-linear} joint source entropy rate in both problems.
\end{remark}

%
%

\section{Approximating the GNS cut  bound}
Determining the GNS cut bound for a given network is computationally hard problem in general \cite{kamath2013study}.
We describe an algorithm to approximately compute the GNS cut bound for a given acyclic network $\tilde{{\cal G}}$. 
We exploit the connection of Theorem~\ref{thm:m-minus-mais-eq-gnscut-Gtilde}
and the special structure of a multiple-unicasts network~$\widetilde{\mathcal{G}}$,
and we utilize known approximation algorithms for the \emph{Feedback Vertex Set problem} on a diagraph.

The Feedback Vertex Set (FVS) problem,
\textit{i.e.}, the problem of finding the smallest FVS in a given digraph~$G$, 
 is NP-complete \cite{seymour1995packing}.
The LP dual of its LP relaxation is the \emph{fractional cycle packing problem}~\cite{even1998, seymour1995packing}.
A fractional cycle packing is a function $q(C)$ from the set of cycles~$\mathcal{C}$ in~$G$ to~${[0,\,1]}$, satisfying
$\sum_{ C \in \mathcal{C}:v \bigcap C \neq \emptyset} q(C) \leq 1$, for each~${v \in V(G)}$.
Letting $|q| =
\sum_{C\in \mathcal{C}} q(C)$, the \emph{fractional cycle packing number}~$\rcp(G)$ of~$G$ is defined to be the maximum of~${|q|}$ taken over all
fractional cycle packings~$q$ in~$G$.
Clearly,~${\rcp(G) \le \lvert \mathcal{F} \rvert}$ for \emph{all} feedback vertex sets~$\mathcal{F}$ in~$G$.
By definition, 
$\Ecard-\mais{G}$ is the cardinality of the \emph{minimum feedback vertex set} in $G$.
Therefore,
\begin{align}
   \rcp(G) 
   \;\le\;
   \Ecard - \mais{G}.
   \nonumber
\end{align}

An \textit{optimal} fractional cycle packing \cite{nutov2004packing} \cite{even1998} (or an $(1+\epsilon)$ approximation, $\epsilon>0$) can be computed in polynomial time (in $m$, $\epsilon^{-1}$). 
A feasible fractional cycle packing on~$G$ can be suitably rounded to yield a FVS~$\mathcal{F}$ with cardinality~$|\mathcal{F}| \le {\rcp(G) \cdot O(\log{\Ecard}\log\log{\Ecard})}$. 
We conclude that for any directed graph~$G$ on~$m$ vertices,
\begin{align}
   \Ecard-\mais{G}
   \;
   \le
   \;
   \rcp(G) \cdot O( \log{\Ecard}\log\log{\Ecard}).
   \label{rcp-sandwich-on-m-mais-loose}
\end{align}

Note that~\eqref{rcp-sandwich-on-m-mais-loose} holds for arbitrary digraphs.
But $G$ has special structure as it is the (reverse) line-graph of a multiple-unicasts network~$\mathcal{G}^\prime $ (itself a modification of a network~$\mathcal{G}$) as described in Section~\ref{G-of-calG}.
Any feedback \emph{vertex} set in $G$ maps trivially to a feedback \emph{edge} set in~$\mathcal{G}^\prime $.
All cycles in $\mathcal{G}^\prime $ go through the~$k$ source nodes of~$\mathcal{G}$. 
There exist polynomial-time algorithms that exploit this additional structure to compute a feedback edge set (in turn, a feedback vertex set in~$G$) with cardinality within a~$O(\log^2{k})$ factor from~$\rcp(G)$~\cite{even1998}.
\begin{theorem}\label{thm:app}
Consider a multiple-unicasts network $\mathcal{G}$ with $k$ sources and $\Ecard$ unit-capacity links.
Let $G$ be the digraph on $\Ecard$ vertices obtained from $\mathcal{G}$ as described in Section~\ref{G-of-calG}.
Then, 
\begin{align}
	\rcp(G)
	\; & \le \;
	\Ecard-\betavl(G) \nonumber\\
	\; &\le \;
	\Ecard-\mais{G} 
	\; \le \;
	{\rcp(G) \cdot O\bigl({\log^2{k}}\bigr)}, \nonumber
\end{align}
where $\rcp(G)$ is the \emph{fractional cycle packing number} of~$G$.
Further, $\rcp(G)$ also equals the joint source entropy rate supported by a feasible (and polynomial-time computable) vector-linear multiple-unicasts network code for the relaxed-correlated sources problem on~$\mathcal{G}$.
\end{theorem}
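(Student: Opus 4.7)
The plan is to establish the three-term inequality $\rcp(G)\le m-\betavl(G)\le m-\mais{G}\le O(\log^2 k)\cdot\rcp(G)$ together with the constructive ``further'' clause by treating each inequality in turn. The middle inequality is immediate from $\mais{G}\le\betavl(G)$, which is the $q=1$ case of Lem.~\ref{lem:beta_VL_gt_mais_q}. The remaining two inequalities carry the content, and I would prove them using, respectively, an explicit vector-linear code derived from a fractional cycle packing and a problem-specific rounding of that same packing.

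For the leftmost inequality I would compute an optimal fractional cycle packing $q$ on $G$ with $|q|=\rcp(G)$ (polynomial-time via~\cite{nutov2004packing,even1998}) and convert it cycle-by-cycle into a feasible vector-linear relaxed-correlated code on $\mathcal{G}$. A cycle $C$ of $G$ is, by the reverse line-graph construction of Section~\ref{G-of-calG}, a closed edge-trail in $\mathcal{G}^\prime$; since $\mathcal{G}$ itself is acyclic, any such trail decomposes into an alternating sequence of source links $t_{i_j}\!\to\! s_{i_j}$ and directed paths $P_j$ inside $\mathcal{G}$ from $s_{i_j}$ to $t_{i_{j+1}}$ (indices mod the cycle length). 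After clearing denominators via a $p$-fold subsymbol blow-up, I would dedicate to each cycle $C$ a fresh packet $\mathbf{m}_C$ whose length equals the integer weight of $C$ in the scaled packing, place identical copies of $\mathbf{m}_C$ on the source link used by $C$ at each $s_{i_j}$, and simply route $\mathbf{m}_C$ along every $P_j$. The cycle-packing constraint $\sum_{C\ni v}q(C)\le 1$ at each vertex $v$ of $G$ is precisely the unit-capacity constraint on the corresponding link of $\mathcal{G}$, so the resulting $(\mathbb{F},p,m,r)$ code is feasible with $r/p=|q|$; each destination $t_{i_{j+1}}$ recovers $\mathbf{m}_C$ off $P_j$, which, because of the correlation imposed on sources lying on the same cycle, equals the symbol placed on its own source link in $\mathcal{E}_{i_{j+1}}$. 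By Cor.~\ref{cor:duality} the achieved rate satisfies $\rcp(G)\le\ratecor(\mathcal{G})=m-\betavl(G)$, and the same construction is the polynomial-time algorithm asserted in the ``further'' clause.

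For the rightmost inequality I would invoke Thm.~\ref{thm:m-minus-mais-eq-gnscut-Gtilde} to recognize $m-\mais{G}$ as the size of a minimum FVS of $G$, or equivalently a minimum feedback edge set of $\mathcal{G}^\prime$. Acyclicity of $\mathcal{G}$ forces every cycle of $\mathcal{G}^\prime$ to traverse a source link and thus to pass through at least one of the $k$ source vertices $s_1,\ldots,s_k$; this is exactly the ``few-hub'' hypothesis under which the multicommodity-flow-based rounding of~\cite{even1998} yields, in polynomial time, a feedback edge set of $\mathcal{G}^\prime$ (equivalently, an FVS of $G$) of size at most $O(\log^2 k)\cdot\rcp(G)$, completing the chain.

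The main obstacle is the first step: cycles of $G$ generally revisit vertices of $\mathcal{G}^\prime$, so when several cycles of the packing share a source node $s_i$ they must do so via distinct source links of $\mathcal{E}_i$ and distinct outgoing edges of $s_i$, and one has to keep careful accounts of which copy of the unit capacity at each node or link is consumed by which cycle. Once each cycle is canonically decomposed into its (source link, $\mathcal{G}$-path) segments and the weights $q(C)$ are summed into the subsymbol allocations, the linearity of the encoding at internal nodes and the decodability at destinations are automatic; the real content of this step is the clean bijection between cycles of $G$, closed edge-trails of $\mathcal{G}^\prime$, and source-correlated routed flows in $\mathcal{G}$.
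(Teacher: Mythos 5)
Your proposal is essentially correct, and the middle and rightmost inequalities follow the paper's own route: the middle one is the $q=1$ case of Lemma~\ref{lem:beta_VL_gt_mais_q}, and the rightmost one identifies $m-\mais{G}$ with a minimum feedback edge set of $\mathcal{G}^\prime$, observes that acyclicity of $\mathcal{G}$ forces every cycle of $\mathcal{G}^\prime$ through the $k$ source nodes, and then applies the subset-feedback-edge-set approximation of~\cite{even1998}, whose LP lower bound is exactly $\rcp(G)$.

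The one place where you take a genuinely different route is the leftmost inequality. The paper cites~\cite{compindex} for the existence of a vector-linear \emph{index} code on $G$ of broadcast rate $m-\rcp(G)$ (``every cycle saves a transmission''), which gives $\betavl(G)\le m-\rcp(G)$ directly, and then invokes the duality of Thm.~\ref{thm:duality} to translate that index code into a relaxed-correlated network code of rate $\rcp(G)$. You instead build the relaxed-correlated network code on $\mathcal{G}$ directly — decomposing each cycle of $G$ into a closed edge-trail of $\mathcal{G}^\prime$, hence an alternating sequence of source links and $\mathcal{G}$-paths, dedicating a scaled packet $\mathbf{m}_C$ to each cycle, correlating the sources on that cycle so they all carry $\mathbf{m}_C$, and routing $\mathbf{m}_C$ along the path segments — and only then invoke Cor.~\ref{cor:duality} to conclude $\rcp(G)\le\ratecor(\mathcal{G})=m-\betavl(G)$. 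The two arguments are dual to one another and carry the same content (cycle savings under the line-graph/index-coding correspondence), but your version is more self-contained, since it does not import the index-coding achievability of~\cite{compindex} as a black box; the paper's version is shorter because it does. Your ``caveat'' paragraph correctly flags the one subtlety, namely that a simple cycle of $G$ is only an edge-trail of $\mathcal{G}^\prime$, so a single cycle can revisit a source node through distinct source links, and edge-disjointness (not vertex-disjointness) is what must be tracked against the cycle-packing constraints — and you note correctly that the constraint $\sum_{C\ni v}q(C)\le 1$ at each vertex $v$ of $G$ is exactly the per-link unit-capacity constraint in $\mathcal{G}^\prime$, so the accounting closes.

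One small technical point the paper handles that you elide: to make the subset-FES algorithm of~\cite{even1998} run in time polynomial in the number of \emph{capacitated} links (rather than the $m$ unit links, which could be much larger), the paper first shows that any minimal FES of $\mathcal{G}^\prime$ either contains all or none of the parallel unit links between a given pair of nodes, so one may pass to the capacitated formulation $\mathcal{L}$ without loss. This does not affect correctness of the $O(\log^2 k)$ bound, but it is needed for the ``polynomial-time'' part of the statement to be stated cleanly; you should include it if you want your proof to match the claimed running time precisely.
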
  
\begin{IEEEproof}
See Appendix, Section~\ref{sec:proof:app}.
\end{IEEEproof}

\section{Price of Independence}
The GNS cut, similar to the novel bounds of Theorem~\ref{thm:uni},
upper bound the optimum vector-linear joint source entropy rate for the relaxed-correlated sources, and in turn for independent sources since ${\ratecor ({\mathcal{G}}) \ge \rateind ({\mathcal{G}})}$.
However, it remains unclear how the gap between the two rates scales.
The following Theorem takes a step towards addressing this question.
\begin{theorem}\label{thm:sep}
   For any prime field $\mathbb{F}_p$, for any constant ${\delta>0}$, there is a $k$ sufficiently large and there exists a family of multiple-unicasts network instances ${\mathcal{G}}$ with~$k$ sources ($k$ sufficiently large) for which 
   $\ratecor \left( {\mathcal{G}};\mathbb{F}_p \right) \ge k^{1- \delta} \cdot \rateind\mathopen{}\left( {\mathcal{G}}; \mathbb{F}_p \right)$. 
   Further, for any two fields $\mathbb{F}_p$ and $\mathbb{F}_q$, for any $\delta>0$, there is a large enough $k$ and a multiple-unicasts network $\mathcal{G}$ such that $\rateind\mathopen{}\left(\mathcal{G}; \mathbb{F}_q \right) \geq k^{1-\delta}\cdot \rateind\mathopen{}\left( \mathcal{G}; \mathbb{F}_p  \right)$, .
\end{theorem}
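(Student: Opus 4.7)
The approach is to import known separation results from the index coding literature and transfer them to multiple-unicasts networks via the duality of Corollary~\ref{cor:duality}. Both parts of the theorem follow the same template, with only the underlying separating graph family changing.

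First, I would invoke the separation results of Lubetzky-Stav~\cite{lubetzky2009nonlinear} and Blasiak-Kleinberg-Lubetzky~\cite{blasiak2013graph}. For each prime $p$ (respectively each pair $p,q$) and each $\delta>0$, these works produce a family of side-information graphs $\{H_n\}_{n\ge 1}$ on $n$ vertices such that $\betavl^{\mathbb{F}_p}(H_n) \ge n^{1-\delta}$, while either the non-linear broadcast rate of $H_n$ or the rate $\betavl^{\mathbb{F}_q}(H_n)$ is at most $n^{\delta}$. The constructions rely on algebraically defined graphs (Kneser-type graphs, projective-plane incidence graphs, and odd-cycle structures) whose minrank over $\mathbb{F}_p$ is provably much larger than their minrank over $\mathbb{F}_q$ or their non-linear counterpart. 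I would then realize each $H_n$ as the side-information graph $G_n$ (in the sense of Section~\ref{G-of-calG}) of a multiple-unicasts network $\mathcal{G}_n$, using the Effros-Langberg-M\'edard equivalence~\cite{effros2012equivalence} or a direct reverse-line-graph construction; if needed, dummy source links and auxiliary sinks can be added to force the output into the prescribed form without altering the relevant rates. The resulting network has $k=\Theta(n)$ source-destination pairs.

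Corollary~\ref{cor:duality} then yields $\ratecor(\mathcal{G}_n; \mathbb{F}) = m - \betavl^{\mathbb{F}}(G_n)$ for any field $\mathbb{F}$, and a routine dualization shows that a good vector-linear MU code over $\mathbb{F}$ with independent sources induces a vector-linear index code over $\mathbb{F}$ for $G_n$ of comparable rate, giving $\rateind(\mathcal{G}_n; \mathbb{F}) \le m - \betavl^{\mathbb{F}}(G_n)$. The field-separation claim then follows since $\rateind(\mathcal{G}_n; \mathbb{F}_q) \ge m - n^{\delta}$ via the dual of the good index code over $\mathbb{F}_q$, whereas $\rateind(\mathcal{G}_n; \mathbb{F}_p) \le m - n^{1-\delta}$; with $m = \Theta(n)$ this produces a multiplicative gap of order $k^{1-O(\delta)}$. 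The correlation-separation claim follows analogously, using the non-linear index coding rate to lower-bound $\ratecor(\mathcal{G}_n; \mathbb{F}_p)$ (viewed as the joint entropy rate supported by an arbitrary, possibly non-linear, correlation pattern) while the Blasiak-Lubetzky lower bound on $\betavl^{\mathbb{F}_p}(G_n)$ upper-bounds $\rateind(\mathcal{G}_n; \mathbb{F}_p)$.

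\textbf{Main obstacle.} The most delicate step is tightly controlling $\rateind(\mathcal{G}_n; \mathbb{F}_p)$ \emph{from above}: Corollary~\ref{cor:duality} only provides a clean identity for $\ratecor$, so the upper bound on $\rateind$ must come from a reduction in the opposite direction (independent-source MU code to index code over the same field). Ensuring that the auxiliary structure introduced when embedding $H_n$ does not create slack that dilutes the separation, and that the field of linearity is preserved end-to-end through the reduction, is the main technical hurdle.
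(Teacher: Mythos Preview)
Your approach has a fundamental arithmetic gap. Corollary~\ref{cor:duality} gives $\ratecor(\mathcal{G};\mathbb{F}) = m - \betavl^{\mathbb{F}}(G)$, an \emph{additive} relation, whereas Lubetzky--Stav supplies a \emph{multiplicative} gap between $\betavl^{\mathbb{F}_p}(H_n)$ and $\betavl^{\mathbb{F}_q}(H_n)$. After subtracting from $m$ this multiplicative gap disappears: if $G_n=H_n$ has $n$ vertices then $m=n$, and your two bounds read $\rateind(\mathcal{G}_n;\mathbb{F}_q)\ge n-n^{\delta}$ and $\rateind(\mathcal{G}_n;\mathbb{F}_p)\le n-n^{1-\delta}$. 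Both quantities are $\Theta(n)$ and their ratio tends to~$1$, not to $k^{1-\delta}$. There are secondary problems as well---dualizing a good $\mathbb{F}_q$-index code via Theorem~\ref{thm:duality} only lower-bounds $\ratecor(\cdot;\mathbb{F}_q)$, not $\rateind(\cdot;\mathbb{F}_q)$, and an arbitrary $H_n$ need not arise as the reversed line graph of Section~\ref{G-of-calG}---but the arithmetic failure alone is already fatal.

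The paper takes a different route that bypasses Corollary~\ref{cor:duality} entirely. It builds the El~Gamal bottleneck network directly from $\overline{G}_u$: one source--terminal pair per vertex, a direct link $s_i\to t_j$ for each side-information edge, and a single shared unit-capacity bottleneck. On this network an index code of broadcast rate~$\beta$ corresponds to the \emph{symmetric} rate point $(1/\beta,\dots,1/\beta)$, so the associated sum-rate is $k/\beta$ and a multiplicative gap in~$\beta$ becomes a multiplicative gap in sum-rate with no loss. The genuine work---absent from your plan---is proving that the optimal sum-rate over $\mathbb{F}$ is actually attained at a symmetric point; the paper does this by exploiting the vertex-transitivity of the Lubetzky--Stav graph and time-sharing over its automorphism group, yielding $\rateind(\overline{\mathcal{G}}_u;\mathbb{F})=k/\betavl^{\mathbb{F}}(\overline{G}_u)$ and hence the second part of the theorem. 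The first part is then deduced from the second via the field-independent sandwich $\rcp(G)\le\ratecor(\mathcal{G};\mathbb{F})\le\rcp(G)\cdot O(\log^2 k)$ of Theorem~\ref{thm:app}, not via a separate non-linear rate argument.
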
   
\begin{IEEEproof}
 See Appendix, Section~\ref{sec:proof:sep}.
\end{IEEEproof}
Theorem~\ref{thm:sep} effectively states that for a \emph{fixed} field,
there exists networks for which the optimum sum-rate over all vector-linear codes \emph{over that field}
is almost a $k$-factor away from the GNS cut bound.
Second, when designing a vector-linear code for a given multiple-unicasts network, the choice of field can have a tremendous impact on performance: it can affect the achievable sum-rate by almost a factor of $k$.

\section{Conclusions}
We presented a sequence of upper bounds on the sum-rate for multiple-unicasts that are valid for vector-linear codes.
The first bound in this sequence is equivalent to the GNS cut bound, which also holds for non-linear codes. 
Further, we showed that the GNS cut bound can be approximated within an~ $O(\log^2{k})$ factor in polynomial time. 
This is, to the best of our knowledge, the only case in the family of generalized cut-set bounds~\cite{kamath2013study} that can be efficiently approximated.
Finally, we show the importance of the field used by the vector-linear code: the GNS cut and the capacity can be very far from the best vector-linear code over a poorly chosen field.

\bibliographystyle{IEEEtran}
\bibliography{GLRCbibv2}

\appendix

\subsection{Proof of Theorem \ref{thm:m-minus-mais-lt-gnscut}}
\label{sec:proof:m-minus-mais-lt-gnscut}
Recall that $\mathcal{G}^\prime$ is the directed network obtained from~$\mathcal{G}$ by setting the destination node $t_i$ to be the tail of each source link of source $s_i$, ${i=1, \hdots, k}$. (Section~\ref{sec:bounduni}).
Further, $G$ is the (reversed) line digraph of $\mathcal{G}^\prime$.
Any set of vertices lying on a cyclic path in $G$ corresponds to a set of edges forming a cycle in $\mathcal{G}^\prime$.
Hence, $\Ecard-\mais{G}$ equals the cardinality of the minimum feedback edge set of the cyclic network $\mathcal{G}^\prime$, \textit{i.e.}, the smallest set of (unit-capacity) edges that need to be removed from $\mathcal{G}^\prime$ to obtain an acyclic network.
To show the desired result, it suffices to show any GNS cut in ${\mathcal{G}}$ is a feedback edge set in $\mathcal{G}^\prime$.

Any cycle in $\mathcal{G}^\prime$ must contain at least one of the edges connecting a destination node $t_i$ to its source node $s_i$:
these are the only links modified to obtain  $\mathcal{G}^\prime$ from $\mathcal{G}$, and the latter is an acyclic network.   
It turn, 
all cycles in $\mathcal{G}^\prime$ are of the form ${t_{i}, s_{i}, \hdots, t_{i}}$.

Let $S$ be a GNS cut in $\mathcal{G}$.
By definition, there exists a permutation $\pi:[k]\rightarrow [k]$ such that
if $\pi(i) \geq \pi(j)$, no path exists from $s_i$ to $t_j$ in ${\mathcal{G}-S}$.
We want to show that ${\mathcal{G}^\prime-S}$ is acyclic.
Assume, for the sake of contradiction, that this is not the case, and let $\mathcal{C} \subseteq \lbrace 1,\hdots, k\rbrace$ be the set of indices such that a source edge from $t_i$ to $s_i$, $~\forall i \in C$ lies on a cycle.
Let $i^{\star} = \max_{i \in \mathcal{C}} \pi(i)$.
Consider a cycle in ${\mathcal{G}^\prime-S}$ going through an edge from $t_{i^{\star}}$ to $s_{i^{\star}}$; it must be of the form $t_{i^{\star}}, s_{i^{\star}}, \hdots, t_{i^{\star}}$. Without loss of generality, only one of the source edges from $t_{i^{\star}}$ to $s_{i^{\star}}$ occurs in this cycle.
Since $S$ is a GNS cut of $\mathcal{G}$, no path exists in $\mathcal{G}$ from $s_{i^{\star}}$ to $t_{i^{\star}}$.
We conclude that a path from $s_{i^{\star}}$ to $t_{i^{\star}}$ in $\mathcal{G}^\prime$ must use edges that are not available in $\mathcal{G}$, that is, edges from $t_j$ to $s_j$ for some $j \in \lbrace 1, \hdots, k\rbrace$.
Let $j^{\star}$ be the source node such that a source edge from $t_{j^{\star}}$ to $s_{j^{\star}}$ is the first source edge appearing in the path from $s_{i^{\star}}$ to $t_{i^{\star}}$.
Then, the path from $s_{i^{\star}}$ to $t_{j^{\star}}$ uses only edges in $\mathcal{G}-S$ (otherwise it would go through another source edge contradicting the fact that edge from $t_{j^{\star}}$ to $s_{j^{\star}}$ is the first source edge in the cycle after $s_{i^{\star}}$). Hence, there is a path in ${\mathcal{G} -S}$ from $s_{i^{\star}}$ to $t_{j^{\star}}$, with $\pi(i^{\star}) > \pi (j^{\star})$ which is a contradiction.

\subsection{Proof of Theorem \ref{thm:m-minus-mais-eq-gnscut-Gtilde}}
\label{sec:proof::m-minus-mais-eq-gnscut-Gtilde}
Recall that $G$ is the (reversed) line graph of $\mathcal{G}^\prime$, 
the cyclic network obtained from $\mathcal{G}$ by connecting each destination node $t_{i}$ to the source links of the source node $s_{i}$, $i =1\hdots, k$, as described in Section~\ref{G-of-calG}.

The quantity $\Ecard-\mais{G}$ is the cardinality of the minimum feedback vertex set in $G$, which in turn equals the cardinality of the minimum feedback edge set (FES) in~$\mathcal{G}^\prime$, \textit{i.e.},
   \begin{align}
      \Ecard-\mais{G}
      \quad = \quad
      \min_{\mathcal{F}^\prime \text{ is a FES in } \mathcal{G}^\prime}|\mathcal{F}^\prime|.
      \label{eq:m-mais-eq-min-FES-Gprime}
   \end{align}
We will show that the right hand side of~\eqref{eq:m-mais-eq-min-FES-Gprime} is equal to $\gnscut{\widetilde{\mathcal{G}}}$, \textit{i.e.},
   the cardinality of the smallest GNS cut in~$\widetilde{\mathcal{G}}$.
   In fact, we will show that for every FES $\mathcal{F}^{\prime}$ in~$\mathcal{G}^\prime$,
   there exists a GNS cut-set  $\widetilde{\mathcal{F}}$ in $\widetilde{\mathcal{G}}$ with $|\widetilde{\mathcal{F}}|=|\mathcal{F}^{\prime}|$, and vice versa.
   
   Ignoring the source links in $\widetilde{\mathcal{G}}$ (links with head vertex $\widetilde{s}_{i}$ for some $i \in [k]$ and no tail vertex),
   we consider the following trivial one-to-one mapping ${\cal M}$ between the links of $\widetilde{\mathcal{G}}$ and those of~$\mathcal{G}^\prime$:
   \begin{itemize}
   	\item Each of the $\mincut{s_i, t_i}$ links from $t_i$ to $s_i$ in $\mathcal{G}^\prime$ is mapped to a link from $\widetilde{s}_i$ to $s_i$ in $\widetilde{\mathcal{G}}$.
	\item All remaining links are common in both networks.
   \end{itemize}
   
   Consider an arbitrary FES $\mathcal{F}^\prime$ in $\mathcal{G}^\prime$.
   Let $\widetilde{\mathcal{F}}$ be the image of $\mathcal{F}^\prime$ under the mapping $\mathcal{M}$.
   We will show that $\widetilde{\mathcal{F}}$ is a GNS cut-set in~$\widetilde{\mathcal{G}}$.
   \begin{claim}
   \label{not-gns-implies-sequence}
   Consider a $k$-unicast network $\mathcal{G}(\mathcal{V}, \mathcal{E})$ and a subset of links $\mathcal{F} \subset \mathcal{E}$.
   If $\mathcal{F}$ is \emph{not} a GNS cut-set in $\mathcal{G}$, then
   there exists a source-destination pair $s_{i}, t_{i}$ with a path from $s_{i}$ to $t_{i}$ in $\mathcal{G}-\mathcal{F}$, 
   or a sequence of $r \ge 2$ distinct indices $i_{1}, \hdots, i_{r} \in \lbrace 1, \hdots, k \rbrace$ such
   source $s_{i_j}$ has a path to destination $t_{i_{j}+1}$, for $j = 1, \hdots ,{r -1}$, 
   and $s_{i_{r}}$ has a path to $t_{i_{1}}$  in $\mathcal{G}-\mathcal{F}$.
   \end{claim}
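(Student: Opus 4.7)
The plan is to reduce the claim to a statement about an auxiliary digraph. I would define $H$ to be the directed graph on vertex set $[k]$ in which there is an edge $i\to j$ (a self-loop is allowed in the case $i=j$) precisely when there is a path from source $s_i$ to destination $t_j$ in $\mathcal{G}-\mathcal{F}$. With this definition, the conclusion of the claim is exactly the assertion that $H$ contains a directed cycle: a self-loop at $i$ corresponds to the single-pair alternative $s_i\leadsto t_i$, whereas a simple cycle $i_1\to i_2\to\cdots\to i_r\to i_1$ of length $r\ge 2$ furnishes the promised sequence of distinct indices with $s_{i_j}\leadsto t_{i_{j+1}}$ and $s_{i_r}\leadsto t_{i_1}$.

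The key step is therefore to prove the equivalence ``$\mathcal{F}$ is a GNS cut-set of $\mathcal{G}$ if and only if $H$ is acyclic.'' This is essentially a restatement of Definition~\ref{definition-gnscut}: a permutation $\pi:[k]\to[k]$ satisfies the GNS condition iff every edge $i\to j$ of $H$ obeys $\pi(i)<\pi(j)$, i.e.\ iff $\pi$ is a topological ordering of $H$. Since a finite directed graph admits a topological ordering exactly when it has no directed cycle (including no self-loop), the two conditions coincide. Contrapositively, if $\mathcal{F}$ is not a GNS cut-set then $H$ contains a directed cycle.

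All that remains is to convert the existence of a cycle into the structured form stated by the claim. If the cycle is a self-loop at some $i$, we are immediately in the first case. Otherwise the cycle has length at least two, and by the standard fact that any directed graph containing a closed walk contains a simple directed cycle (shortcut any repeated vertex), we may assume the vertices $i_1,\ldots,i_r$ on the cycle are distinct; unfolding the definition of the edges of $H$ yields the required paths $s_{i_j}\leadsto t_{i_{j+1}}$ for $j=1,\ldots,r-1$ and $s_{i_r}\leadsto t_{i_1}$ in $\mathcal{G}-\mathcal{F}$.

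I do not anticipate a real obstacle: the entire argument is a dictionary between Definition~\ref{definition-gnscut} and the topological-ordering characterization of DAGs, followed by a cycle-simplification step. The only point that calls for explicit care is ensuring that the extracted cycle in $H$ is simple, so that the indices produced in the second alternative of the claim are genuinely distinct as required.
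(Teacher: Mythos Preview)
Your proposal is correct and follows essentially the same approach as the paper: both build the auxiliary digraph $H$ on $[k]$ with an edge $i\to j$ whenever $s_i\leadsto t_j$ in $\mathcal{G}-\mathcal{F}$, and both reduce the GNS condition to the existence of a topological ordering of $H$. The paper argues the contrapositive directly (no self-loop and no cycle $\Rightarrow$ $H$ acyclic $\Rightarrow$ topological order $\Rightarrow$ GNS cut), while you phrase it as the equivalence and then extract a simple cycle; these are the same argument.
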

The proof of Claim~\ref{not-gns-implies-sequence} is deferred to the end of this section. 
It follows from Claim~\ref{not-gns-implies-sequence}
that if $\widetilde{\mathcal{F}}$ is not a GNS cut-set of $\widetilde{\mathcal{G}}$, 
then ${\mathcal{G}^\prime - \mathcal{F}^\prime}$ contains a cycle,
contradicting the fact that $\mathcal{F}^\prime$ is a FES of $\mathcal{G}^\prime$.
We conclude that $\widetilde{\mathcal{F}}$ is a GNS cut-set in~$\widetilde{\mathcal{G}}$.
Note that $|\widetilde{\mathcal{F}}| = |\mathcal{F}^{\prime}|$.
Finally, the above implies that 
   \begin{align}
      \gnscut{\widetilde{\mathcal{G}}} 
      \quad \le \quad
      \min_{\mathcal{F}^\prime \text{ is a FES in } \mathcal{G}^\prime}|\mathcal{F}^\prime|.
      \label{eq:gnscut-Gtilde-le-min-FES-Gprime}
   \end{align}

Conversely, consider an arbitrary GNS cut-set $\widetilde{\mathcal{F}}$ in~$\widetilde{\mathcal{G}}$.
Let $\mathcal{F}^\prime$ be the (inverse) image of $\widetilde{\mathcal{F}}$ according to the mapping ${\cal M}$. 
We will show that $\mathcal{F}^\prime$ is an FES in~$\mathcal{G}^\prime$.

$\widetilde{\mathcal{F}}$ is a GNS cut-set.
Hence, there exists a permutation ${\pi:[k]\rightarrow [k]}$ such that
if~${\pi(i) \geq \pi(j)}$, no path exists from $\widetilde{s}_i$ to $t_j$ in ${\widetilde{\mathcal{G}}-\widetilde{\mathcal{F}}}$, $\forall i, j \in [k]$.
Assume, for the sake of contradiction, that $\mathcal{F}^\prime$ is not a FES in~$\mathcal{G}^\prime$, \textit{i.e.}, $\mathcal{G}^\prime - \mathcal{F}^\prime$ contains a cycle.
Any cycle in $\mathcal{G}^\prime$ has to include a link from the destination node $t_i$ to the source node $s_i$, for some ${i \in \lbrace 1, \hdots, k\rbrace}$,
\textit{i.e.}, it is of the form ${t_i, s_i, \hdots, t_i}$ including one or several source nodes.  
Using an argument identical to that in the proof of Theorem~\ref{thm:m-minus-mais-lt-gnscut}:
either
\textit{(i)} the cycle contains a path from $s_i$ to $t_i$, which is a path in $\mathcal{G}$, or 
\textit{(ii)} $\exists j: \pi(i) >\pi(j)$ and the cycle contains a path from $s_i$ to $t_j$.
This in turn implies (under the mapping $\mathcal{M}$)
that $\widetilde{\mathcal{G}}-\widetilde{\mathcal{F}}$ 
contains either a path from $\widetilde{s}_i$ to $t_i$, or a path from $\widetilde{s}_i$ to $t_j$ contradicting that $\widetilde{{\cal F}}$ is a GNS cut-set in $\widetilde{\mathcal{G}}$.
   We conclude that $\mathcal{F}^\prime$ is a FES of~$\mathcal{G}^\prime$,
   while by construction $|\mathcal{F}^\prime| = |\widetilde{\mathcal{F}}|$.
Finally, the above imply that 
   \begin{align}
      \min_{\mathcal{F}^\prime \text{ is a FES in } \mathcal{G}^\prime}|\mathcal{F}^\prime|
      \quad \le \quad
      \gnscut{\widetilde{\mathcal{G}}}.
      \label{eq:gnscut-Gtilde-ge-min-FES-Gprime}
   \end{align}
   The theorem follows from \eqref{eq:m-mais-eq-min-FES-Gprime}, \eqref{eq:gnscut-Gtilde-le-min-FES-Gprime} and \eqref{eq:gnscut-Gtilde-ge-min-FES-Gprime}.
   
   \subsubsection{Proof of Claim~\ref{not-gns-implies-sequence}}
   We prove the contrapositive statement;
   if $\mathcal{G}-\mathcal{F}$ contains
   no source-destination pair $s_{i}, t_{i}$ such that $s_{i}$ has a path to $t_{i}$,
   nor a sequence of $r \ge 2$ distinct indices $i_{1}, \hdots, i_{r} \in \lbrace 1, \hdots, k \rbrace$ 
   with the properties described in the claim, then $\mathcal{F}$ is a GNS cut. 
   
   Consider a directed graph $H$ on $k$ vertices labeled $1, \hdots, k$,
   with vertex $i$ corresponding to the $i$th source-terminal pair $s_{i}, t_{i}$ of~$\mathcal{G}-\mathcal{F}$. 
   A directed edge $(i,j)$ from vertex $i$ to vertex $j$ exists in $H$ if and only if a path from $s_{i}$ to $t_{j}$ exists in $\mathcal{G}-\mathcal{F}$.
   
   By assumption, $\mathcal{G}-\mathcal{F}$ contains no source-destination pair $s_{i}, t_{i}$ such that $s_{i}$ has a path to $t_{i}$. Hence, $H$ contains no self-loops.
   Further, it is straightforward to verify that $H$ contains a cyclic path $i_{1}, \hdots, i_{r}, i_{1}$, $r \ge 2$ if and only if the sequence of indices $i_{1}, \hdots, i_{r} \in \lbrace 1, \hdots, k \rbrace$ satisfies the property described in the claim.
   By assumption, no such sequence exists in $\mathcal{G}-\mathcal{F}$ either.
   Hence, $H$ is acyclic and has a topological ordering, 
   \textit{i.e.},
    a permutation ${\pi:[k]\rightarrow [k]}$ of the $k$ vertices such that for all $i,j \in [k]$, if $\pi(i) \ge \pi(j)$, then no edge exists from $i$ to $j$ . 
   In turn, if $\pi(i) \ge \pi(j)$, no path exists from source $s_i$ to destination $t_j$ in $\mathcal{G}-\mathcal{F}$.
   The existence of such a permutation implies that $\mathcal{F}$  is a GNS cut of~$\mathcal{G}$ (see Def.~\ref{definition-gnscut}).
$\qed$

\subsection{Proof of Theorem \ref{thm:app}}
\label{sec:proof:app}
Consider a multiple-unicasts network $\mathcal{G}$ with $k$ sources and $\Ecard$ (unit-capacity) edges in the set ${\cal E}$. 
Recall that~$\mathcal{G}^\prime$ is the network formed by setting the destination~$t_i$ as the tail of all source links of~$s_i$ in $\mathcal{G}$, ${\forall i \in \lbrace 1,\hdots, k\rbrace}$. 
We define~$\mathcal{L}$ to be the set of \textit{capacitated links} of $\mathcal{G}^{\prime}$:
$(a,b) \in \mathcal{L}$ if and only if there exists a link from $a$ to $b$ in $\mathcal{G}^{\prime}$ and the capacity of $(a,b)$, denoted by $c_{a,b}$, is equal to the number of unit links from $a$ to $b$ in $\mathcal{G}^{\prime}$.

First, we observe that to find the smallest feedback edge set in $\mathcal{G}^\prime$, it suffices to consider the set of capacitated links $\mathcal{L}$.
Consider a set $\mathcal{F}_{\mathsmaller{E}}^\prime \subseteq \mathcal{E}$ be a minimal feedback edge set of $\mathcal{G}^\prime$,
\textit{i.e.}, a minimal subset of \emph{unit-capacity} edges whose removal from~$\mathcal{G}^\prime$ yields an acyclic network.
If there exist multiple links from node $u$ to node $v$ in $\mathcal{G}^\prime$, then either all or none of them is included in $\mathcal{F}_{\mathsmaller{E}}^\prime$.
To verify that,
let $e$, $\widehat{e}$ be unit links from $u$ to $v$,
and $e \in \mathcal{F}_{\mathsmaller{E}}^\prime$, while $\widehat{e} \notin \mathcal{F}_{\mathsmaller{E}}^\prime$.
By construction, $\mathcal{G}^\prime - \mathcal{F}_{\mathsmaller{E}}^\prime$ is an acyclic network.
The minimality of $\mathcal{F}_{\mathsmaller{E}}^\prime$ implies that $\mathcal{G}^\prime$ contains a cycle $u^\prime, \hdots, {u \stackrel{e}{\rightarrow} v}, \hdots, u^\prime$ whose only edge contained in $\mathcal{F}_{\mathsmaller{E}}^\prime$ is $e$.
But, $\widehat{e} \notin \mathcal{F}_{\mathsmaller{E}}^\prime$ and hence $u^\prime, \hdots, u \stackrel{\widehat{e}}{\rightarrow} v, \hdots, u^\prime$ forms a cycle in $\mathcal{G}^\prime - \mathcal{F}_{\mathsmaller{E}}^\prime$, contradicting the fact that $\mathcal{F}_{\mathsmaller{E}}^\prime$ is a feedback edge set.

The second key observation is that since $\mathcal{G}$ is acyclic,
every cycle in $\mathcal{G}^\prime$ must include an edge from a destination node to a source node.
Equivalently, all cycles in $\mathcal{G}^\prime$ 
go through the set of nodes $\mathcal{S} = \lbrace s_1, \hdots, s_k \rbrace$,
\textit{i.e.}, the set of $k$ source nodes. 
The \emph{minimum weight subset-feedback edge set problem} (see \cite{even1998}) is the problem of finding a set of minimum weight edges that cuts all cycles passing through a \emph{specific} set of nodes. 
Finding a feedback edge set in~$\mathcal{G}^\prime$ is equivalent to solving the minimum weight subset-feedback edge set problem for the set of source nodes $\mathcal{S}$ (using the capacitated links $\mathcal{L}$ with weights coinciding with the corresponding capacity). 
The modified sphere growing approximation algorithm of \cite{even1998} with input $\mathcal{G}^\prime$ and $\mathcal{S}$ constructs a feedback edge set of weight within a $O({\log^2{|\mathcal{S}|}})$ factor from that of the minimum \emph{fractional} weighted feedback edge set of~$\mathcal{G}^\prime$, in time polynomial in~$|\mathcal{L}|$.
The weight of the minimum \emph{fractional} weighted feedback edge set coincides with the \emph{fractional cycle packing number} of ~$G$, $\rcp(G)$, where $G$ is the (reversed) line graph of~$\mathcal{G}^\prime$.
In other words, the aforementioned algorithm yields a feedback edge set of~$G$ with weight at most~$\rcp(G) \cdot O({\log^2{k}})$.


It is known \cite{compindex} that $\Ecard - \rcp(G)$ equals the broadcast rate of a vector-linear index code~$C$ for index coding instance with side-information graph~$G$.
In other words, there exists an index code~$C$, which can be defined over any field $\mathbb{F}$, that achieves broadcast rate ${\betavl^{\mathbb{F}}(\mathcal{C}; G) = \Ecard - \rcp(G)}$. 
The basis for the coding scheme is that every cycle in~$G$ saves a transmission.
This implies $\Ecard - \betavl(G)  \;\ge\;  \rcp(G)$. Further, by duality in Theorem \ref{thm:duality}, there is a feasible code for the relaxed-correlated sources problem on {\cal G} whose joint entropy rate is $\rcp(G)$.

\subsection{Proof of Theorem~\ref{thm:sep}}
\label{sec:proof:sep}
Our proof relies on the examples previously used to separate scalar linear and non-linear broadcast rates for index coding. 
We first show that the same examples imply separation results between vector-linear broadcast rates over two different fields. 
These results require the application of an \emph{uncertainty principle} (Lemma \ref{lem:bvl-uncertainty}) for vector-linear broadcast rates for index coding.\footnote{%
We note that Lemma \ref{lem:bvl-uncertainty}, although implied by results in~\cite{blasiak2013graph},
is to the best of our knowledge not explicitly stated in the literature.}
At first sight, those examples imply separation for only symmetric rates. 
However, we exploit certain symmetry properties that imply separation for sum-rates (asymmetric rates) when written as network coding instances.
We show that that there exist multiple-unicasts network instances for which optimal sum-rate (for independent sources) over two different prime fields~$\mathbb{F}_{p}$ and~$\mathbb{F}_{q}$ can be separated by a factor of $k^{1-o(1)}$.
This, along with Theorem~\ref{thm:app} concludes the proof.

\textit{Index coding -- optimum vector-linear broadcast rates over two different prime fields:}
The maximum vector-linear broadcast rate $\betavl^{\mathbb{F}}(G)$ over a field $\mathbb{F}$ for an index coding instance with side-information graph $G$ may vary significantly with the choice of $\mathbb{F}$.
We show that for any two prime fields $\mathbb{F}_p$ and $\mathbb{F}_q$,
there exist index coding instances $G$ for which 
${\betavl^{\mathbb{F}_p}(G)}$ and ${\betavl^{\mathbb{F}_q}(G)}$ for which 
${\betavl^{\mathbb{F}_q}(G) / \betavl^{\mathbb{F}_p}(G) \ge k^{1-o(1)}}$.

For any two prime fields $\mathbb{F}_p$ and $\mathbb{F}_q$,
Lubetzky and Stav \cite{lubetzky2009nonlinear} construct an undirected graph $G_u$
such that 
${\minrank_{\mathbb{F}_{q}}\left(G_u\right)\le k^{o(1)}}$
and 
${\minrank_{\mathbb{F}_{p}}\left(\overline{G}_{u}\right)\leq k^{o(1)}}$,
where $\overline{G}_{u}$ denotes the undirected graph complement.
The set of vertices of $G_{u}$ is the collection of all $s$-subsets of $[r]$.
An undirected edge between two vertices $X$ and $Y$ exists if and only if $ X \neq Y$ and $\lvert X \bigcap Y \rvert \equiv -1 (\mod p^b)$.
The parameters $r$, $s$ and $b$ depend on the choice of~$p$ and~$q$ (see Section~$2$ in~\cite{lubetzky2009nonlinear}).
Recall that~$\minrank_{\mathbb{F}}(G_u)$ equals the optimal broadcast rate on the index coding instance with side-information graph~$G_u$ among \emph{scalar} linear codes over~$\mathbb{F}$.
We conclude that 
\begin{align}
   \betavl^{\mathbb{F}_{p}} (G_u) \leq k^{o(1)}
   \quad \text{and} \quad
   \betavl^{\mathbb{F}_{q}} (\overline{G}_{u}) \leq k^{o(1)}. 
   \label{eq:2-fields-index-coding-sep}
\end{align}

By Lemma~\ref{lem:bvl-uncertainty} (\emph{uncertainty principle}), we know that for any field $\mathbb{F}_p$, $\betavl^{\mathbb{F}_{p}}\left(\overline{G}_{u}\right) \cdot \betavl^{\mathbb{F}_{p}}\left(G_u \right) \ge k$. This implies, 
\begin{align}
   \betavl^{\mathbb{F}_{q}} (\overline{G}_u) \leq k^{o(1)}
   \quad \text{and} \quad
   \betavl^{\mathbb{F}_{p}} (\overline{G}_{u}) \geq k^{1-o(1)}. 
   \label{eq:3-fields-index-coding-sep}
\end{align}

\textit{The network coding instance:}
Given $\overline{G}_{u}$,
we consider the multiple-unicasts network $\overline{\mathcal{G}}_{u}$ with $k$ sources constructed as follows\footnote{This is a slight modification of the construction in (Fig.~$2$)\cite{el2010index}.}:
\begin{enumerate}
   \item Create a source-destination pair $s_i$, $t_i$  for each vertex $i$ in $\overline{G}_{u}$.
   A unit-capacity link from  $s_i$ to $t_j$ exists in the network if and only if a (directed) edge $(i,j)$ exists in $\overline{G}_{u}$. 
   (Undirected edges in $\overline{G}_{u}$ are equivalent to two directed ones.)
   \item Introduce two nodes $a$ and $b$ with a unit-capacity link from $a$ to $b$.
   Connect $s_i$ to $a$ and $b$ to $t_i$ with unit-capacity links, $\forall i \in \lbrace 1, \hdots, k\rbrace$.
   \item Introduce a single unit-capacity source link (with no tail) for each $s_i$.
   Note that $\mincut{s_i,t_i}=1$ since there is a single path from $s_i$ to $t_i$; through link $(a,b)$.
\end{enumerate}
Any index code achieving broadcast rate $\beta$ in the index coding instance with side-information graph $\overline{G}_{u}$,
corresponds to a symmetric rate point $\left(\sfrac{1}{\beta}, \sfrac{1}{\beta}, \hdots \right)$ on the multiple-unicasts network instance $\overline{\mathcal{G}}_{u}$ with independent sources, and vice versa~\cite{el2010index}.

In principle, the optimum sum-rate on the multiple-unicasts network $\overline{\mathcal{G}}_{u}$ could be achievable by an asymmetric rate point.
We show that for~$\overline{\mathcal{G}}_{u}$ this is not the case; the optimal sum-rate can be achieved by a symmetric rate point.
Assume that an asymmetric rate point $\left(r_{1},\hdots,r_{k}\right)$ is achievable on $\overline{\mathcal{G}}_{u}$.
The original graph~$\overline{G}_{u}$ is vertex transitive:
given any two vertices~$v_1$ and~$v_2$ of~$\overline{G}_{u}$,
there is some automorphism ${f:V(G) \rightarrow V(G)}$ such that ${f(v_1) = v_2}$.
The multiple-unicasts network corresponding to the image of~$\overline{G}_{u}$ under~$f(\cdot)$ is identical to~$\overline{\mathcal{G}}_{u}$ (with rearranged sources),
implying that the rate point ${\bigl(r_{f^{-1}(1)},\hdots, r_{f^{-1}(k)}\bigr)}$ is also achievable on $\overline{\mathcal{G}}_{u}$.
$\overline{G}_{u}$ has a special structure: 
Each vertex is an $s$-subset of $[r]$ and a vertex for each subset exists. Further, the edge between two vertices is determined by only the size of the intersection of subsets corresponding to the pair of vertices. Therefore, the number of automorphisms~$f$ mapping~$v$ to~$u$ in~$\overline{G}_{u}$, 
is equal to the number of automorphisms mapping $v^\prime$ to $u$ in $\overline{G}_{u}$, and so on. This is because, all subsets in some sense are equivalent and therefore can be mapped in similar ways to a given subset.
Consider the entire collection of achievable rate tuples corresponding to all the automorphisms.
Then, the number of rate tuples in which user $u$ gets rate $r_v$ is equal to the number of tuples in which he gets rate $r_v^{\prime}$ and so on. The same holds for all users.
Therefore, time sharing among those rate tuples yields a symmetric rate point achieving sum-rate equal to $r_{1}+\hdots+r_{k}$.

In summary, any index code achieving broadcast rate~$\beta$ in the index coding instance with side-information graph~$\overline{G}_{u}$, corresponds to a symmetric rate point in the multiple-unicasts network coding instance on~$\overline{\mathcal{G}}_{u}$ with independent sources.
Further, the optimum sum-rate on~$\overline{\mathcal{G}}_{u}$ is achievable by a symmetric rate point.

We conclude that there exists a symmetric rate tuple for~$\overline{\mathcal{G}}_{u}$ that achieves sum-rate 
$\rateind\mathopen{}\left(\overline{\mathcal{G}}_{u},\mathbb{F}_{p}\right) = k/\betavl^{\mathbb{F}_{p}}\left(\overline{G}_{u}\right)$
and a tuple achieving 
$\rateind\mathopen{}\left({\overline{\mathcal{G}}_{u}},\mathbb{F}_{q}\right) =k/ \betavl^{\mathbb{F}_{q}}\left(\overline{G}_{u}\right)$.
In conjunction with \eqref{eq:3-fields-index-coding-sep}, we conclude that
\begin{align}
 \rateind\mathopen{}\left(\overline{\mathcal{G}}_{u},\mathbb{F}_{p}\right)
 \le k^{o(1)}
 \text{ and }
 \rateind\mathopen{}\left({\overline{\mathcal{G}}_{u}},\mathbb{F}_{q}\right) 
 \ge k^{1 -o(1)},
 \label{eq:2-fields-multicast-sep}
\end{align}
which proves the second part of the theorem.

\textit{Contradiction:}
Assume for the sake of contradiction that ${\ratecor\mathopen{}\left(\overline{\mathcal{G}}_{u},\mathbb{F}_p\right) < k^{1- \delta} \cdot  \rateind\mathopen{}\left({\overline{\mathcal{G}}_{u}},\mathbb{F}_p\right)} $.
By Theorem~\ref{thm:app}, we know that for any field $\mathbb{F}$,
$
   \rcp(\overline{G}_{u})
   \le
   \ratecor\mathopen{}\left(\overline{\mathcal{G}}_{u},\mathbb{F}\right)
   \le 
   \rcp(\overline{G}_{u})\cdot O(\log^2{k}).
$
Then, according to our assumption,
\begin{align}
   \rcp(\overline{G}_{u})
   \;<\;
   k^{1- \delta} \cdot \rateind\mathopen{}\left({\overline{\mathcal{G}}_{u}},\mathbb{F}_p\right).
   \label{eq:rcp-with-k-Rmu}
\end{align}
Also, $\rateind\mathopen{}\left({\overline{\mathcal{G}}_{u}},\mathbb{F}_q\right) \le \ratecor\mathopen{}\left({\overline{\mathcal{G}}_{u}},\mathbb{F}_q\right)$ which implies that
\begin{align}
   \rateind\mathopen{}\left({\overline{\mathcal{G}}_{u}},\mathbb{F}_q\right)
   \;\le\;
   \rcp(\overline{G}_{u}) \cdot O(\log^2{k}).
   \label{eq:rcp-with-Rmu-log2k}
\end{align}
Combinining \eqref{eq:rcp-with-k-Rmu} and \eqref{eq:rcp-with-Rmu-log2k}, we obtain
\begin{align}
   \rateind\mathopen{}\left({\overline{\mathcal{G}}_{u}},\mathbb{F}_q\right)
&  \;<\;
   k^{1- \delta} \cdot O(\log^2{k}) \cdot \rateind\mathopen{}\left({\overline{\mathcal{G}}_{u}},\mathbb{F}_p\right) \nonumber \\
&\;<\;
   k^{1- o(1)}\cdot \rateind\mathopen{}\left({\overline{\mathcal{G}}_{u}},\mathbb{F}_p\right), \nonumber
\end{align}
which contradicts \eqref{eq:2-fields-multicast-sep}.

\textit{Conclusion: }
For any prime field $\mathbb{F}_{p}$, we can construct a network ${\overline{\mathcal{G}}_{u}}$ such that
${\ratecor\mathopen{}\left(\overline{\mathcal{G}}_{u},\mathbb{F}_p\right) 
\ge 
k^{1- \delta} \cdot \rateind\mathopen{}\left({\overline{\mathcal{G}}_{u}},\mathbb{F}_p\right)}$.
This proves the first part of the theorem.
   
\subsection{Lower bounds on vector-linear Index Coding using strong products of graphs}\label{sec:strongprod} 
\begin{definition}
   \label{def:strong_product_of_graphs}
   (\textbf{Strong product of digraphs})
   The strong product $G \otimes H$ of two digraphs $G$ and $H$ is a digraph 
   on the set of vertices
   $V(G \otimes H) \triangleq \left\{ (u,v) : u \in V(G), v \in V(H) \right\} \nonumber$,
   with set of edges $E(G \otimes H)$ 
   that contains an edge from 
    $(u,v)$ to $(u^\prime, v^\prime)$ if and only if the following conditions both hold:
    \begin{enumerate}
     \item ${u=u^\prime} \,\vee\, (u,u^\prime) \in E(G)$
     \item ${v=v^\prime} \,\vee\, (v,v^\prime) \in E(H)$.
    \end{enumerate}
\end{definition}
\begin{definition}
	\label{def:digraph-complement}
	(\textbf{Complement of a digraph})
	The complement of a digraph ${G = (V, E)}$ is a digraph 
	${\overline{G}=(V, \overline{E})}$ on $V$, 
	where $\overline{E} = \left\lbrace (u, v) : u,v \in V, (u,v) \notin E \right\rbrace$.
\end{definition}

$\alpha(G)$ denotes an independent set of $G$, i.e. a set of vertices any two of which have no edge in either direction.
\begin{definition}
   \label{def:digraph-k-blowup}
   (\textbf{$k$-blowup of a digraph})
   The $k$-blowup of a digraph $G$ on $n$ vertices,
   denoted by $\blowup{G}{k}$,
   is a digraph on $k\cdot n$ vertices such that:
   \begin{itemize}
      \item $V(\blowup{G}{k})$ contains $k$ distinct vertices $\subpacket{v}{1}, \hdots, \subpacket{v}{k}$ for each vertex $v \in V(G)$. 
      \item An edge from $\subpacket{u}{i}$ to $\subpacket{v}{j}$, 
      $i, j \in [k]$ exists in $E(\blowup{G}{k})$
      if and only if ${(u, v) \in E(G)}$.
   \end{itemize}
\end{definition}
The definition of the $k$-blowup implies a surjective mapping from~$V(\blowup{G}{k})$ to~$V(G)$.
Throughout this paper, we use the convention that vertices $\subpacket{v}{i}$, $i=1,\hdots, k$, in $\blowup{G}{k}$ originate from (map to) vertex $v$ in $G$.
Conversely,~$v$ yields vertices $\subpacket{v}{1}, \hdots, \subpacket{v}{k}$ in $\blowup{G}{k}$.
Note that $\lbrace\subpacket{v}{i}\rbrace_{i=1}^{k}$ forms an independent set of cardinality~$k$ in~$\blowup{G}{k}$.
Further, each edge ${(u, v) \in E(G)}$ corresponds to a directed biclique\footnote{A directed biclique is a biclique with partition $(U, V)$ whose edges have 
been oriented from $U$ to $V$.}
$(\lbrace \subpacket{u}{i} \rbrace_{i=1}^{k}, \lbrace \subpacket{v}{i} \rbrace_{i=1}^{k})$ in $\blowup{G}{k}$.
\begin{definition}
   $\mais{G}$ of a directed graph $G$ is the cardinality of the largest set~${\widehat{V} \subseteq V(G)}$,
   such that the subgraph of~$G$ induced by~$\widehat{V}$ is acyclic.
\end{definition}
\begin{remark}
   For any undirected graph $G$, ${\mais{G} = \alpha(G)}$.
   An undirected graph $G$ can be considered as a digraph such that ${(u, v) \in E(G)} \Rightarrow {(v, u) \in E(G)}$.
   Hence, any edge $(u,v)$ in $G$ corresponds to a cycle ${u\rightarrow v\rightarrow u}$.
   In undirected graphs, 
   acyclic subgaphs coincide with independent sets. 
\end{remark}

Let $G$ be a digraph on $n$ vertices without self-loops.
We say that an~$n \times n$ matrix $\mathbf{A}$ over a finite field~$\mathbb{F}_q$ \textit{fits} $G$ if
\begin{inparaenum}[\itshape{(i)}]
 \item $A_{ii} \in \mathbb{F}_q \backslash \lbrace 0 \rbrace$, $\forall i \in [n]$, and
 \item ${A_{ij} = 0}$ ${\forall (i, j) \notin E(G)}$.
\end{inparaenum}
Let $\rank_q(\mathbf{A})$ denote the rank of $\mathbf{A}$ over $\mathbb{F}_q$.
\begin{definition}
	\label{def:minrk}
	$\minrank_q(G) = \min\mathopen{}\left\lbrace \rank_q(\mathbf{A})\,:\, \mathbf{A} \text{ fits } G \right\rbrace$.
\end{definition}

\begin{lemma} 
	\label{lemma:subgraph}
	Consider $m$ directed graphs $G_i$, $i = 1, \hdots, m$, and $m$ positive integers $k_i \in \mathbb{N}_{+}$.
	Let $G_{\alpha}$ be the strong product of~$\blowup{G_i}{k_i}$, $i=1, \hdots, m$, \textit{i.e.},
	\begin{align} 
		G_{\alpha} = \otimes_{i=1}^{m}\;\blowup{G_i}{k_i}, \nonumber
	\end{align}
	and $G_{\beta}$ be the $\left(\prod_{i=1}^{m} k_i\right)$-blowup of the strong product of $G_i$'s:
	\begin{align} 
		G_{\beta} = \blowup{\left(\otimes_{i=1}^{m}G_i\right)\mathclose{}}{
		\textstyle\prod_{i=1}^{m} k_i} 	\nonumber.
	\end{align}
	Then, $G_{\alpha}$ is (isomorphic to) a spanning subgraph of $G_{\beta}$.
\end{lemma}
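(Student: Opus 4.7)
The plan is to exhibit an explicit bijection $\phi : V(G_\alpha) \to V(G_\beta)$ and verify that $\phi$ sends every directed edge of $G_\alpha$ to a directed edge of $G_\beta$; since both graphs trivially have the same number of vertices $\prod_{i=1}^m k_i |V(G_i)|$, this is enough to conclude that $G_\alpha$ is isomorphic to a spanning subgraph of $G_\beta$.

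First I would parametrize both vertex sets. By the definition of the blowup, a vertex of $\blowup{G_i}{k_i}$ is a pair $(v_i,j_i)$ with $v_i \in V(G_i)$ and $j_i \in [k_i]$, and thus a vertex of $G_\alpha = \otimes_{i=1}^{m} \blowup{G_i}{k_i}$ is an $m$-tuple $\bigl((v_1,j_1),\dots,(v_m,j_m)\bigr)$. On the other side, a vertex of the strong product $\otimes_{i=1}^m G_i$ is a tuple $(v_1,\dots,v_m)$, so after the $(\prod_i k_i)$-blowup a vertex of $G_\beta$ is a pair $\bigl((v_1,\dots,v_m),\,\mathbf{j}\bigr)$ with $\mathbf{j} \in [\prod_i k_i]$. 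Using any fixed bijection $[\prod_i k_i] \leftrightarrow [k_1]\times\cdots\times[k_m]$, define
\[
\phi\bigl((v_1,j_1),\dots,(v_m,j_m)\bigr) \;=\; \bigl((v_1,\dots,v_m),\,(j_1,\dots,j_m)\bigr).
\]
This is clearly a bijection between the two vertex sets.

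Next I would unfold what it means for a directed edge to exist in each graph. In $G_\alpha$, by the definition of the strong product, an edge from $\bigl((v_i,j_i)\bigr)_i$ to $\bigl((v'_i,j'_i)\bigr)_i$ exists iff for every $i$, either $(v_i,j_i)=(v'_i,j'_i)$ or $\bigl((v_i,j_i),(v'_i,j'_i)\bigr) \in E(\blowup{G_i}{k_i})$; by Definition~\ref{def:digraph-k-blowup}, the latter holds iff $(v_i,v'_i) \in E(G_i)$, with $j_i,j'_i$ arbitrary. In $G_\beta$, an edge from $\bigl((v_1,\dots,v_m),\mathbf{j}\bigr)$ to $\bigl((v'_1,\dots,v'_m),\mathbf{j}'\bigr)$ exists iff $\bigl((v_1,\dots,v_m),(v'_1,\dots,v'_m)\bigr) \in E(\otimes_{i=1}^m G_i)$, with $\mathbf{j},\mathbf{j}'$ arbitrary; and by the definition of the strong product the latter holds iff for every $i$, either $v_i=v'_i$ or $(v_i,v'_i) \in E(G_i)$.

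Finally I would compare the two edge conditions: an $i$-th coordinate contributing to an edge of $G_\alpha$ requires either the stronger equality $(v_i,j_i)=(v'_i,j'_i)$ or the relation $(v_i,v'_i) \in E(G_i)$, and in either case this implies the $G_\beta$ condition ``$v_i = v'_i$ or $(v_i,v'_i) \in E(G_i)$.'' Hence every edge of $G_\alpha$ is mapped by $\phi$ to an edge of $G_\beta$, which is exactly what ``$G_\alpha$ is isomorphic to a spanning subgraph of $G_\beta$'' means. There is no real obstacle here beyond careful bookkeeping of the two kinds of coordinate-wise edge conditions; the key insight is simply that the blowup operation ``forgets'' the multiplicity indices when deciding adjacency, whereas the strong product of the blowups retains them in the equality case, making $G_\alpha$ strictly sparser.
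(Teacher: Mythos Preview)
Your proof is correct and is essentially identical to the paper's: both set up the same bijection $\phi$ (the paper calls it $f$) between $V(G_\alpha)$ and $V(G_\beta)$, unfold the coordinate-wise edge conditions for the strong product and the blowup, and observe that the $G_\alpha$ condition in each coordinate (equality of the full pair $(v_i,j_i)=(v'_i,j'_i)$, or $(v_i,v'_i)\in E(G_i)$) implies the weaker $G_\beta$ condition ($v_i=v'_i$ or $(v_i,v'_i)\in E(G_i)$). Your closing remark about the blowup ``forgetting'' the multiplicity indices is exactly the point of the argument.
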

\begin{proof}
	We establish a bijection ${f: V(G_{\alpha}) \rightarrow V(G_{\beta})}$ between the vertices of $G_{\alpha}$ and $G_{\beta}$,
	and show that two vertices $u,v \in V(G_{\alpha})$ are connected with an edge
	only if vertices $f(u),f(v)$ in $G_{\beta}$ are also connected.
	
	Let $n_i$ denote the number of vertices in $G_i$,
	and ${N \triangleq \prod_{i=1}^{m}n_i}$.
	By construction, $\blowup{G_i}{k_i}$
	is a graph on~${k_i \cdot n_i}$ vertices:
	$k_i$ vertices $\subpacket{v_i}{j}$, $j \in \lbrace 1, \hdots, k_i \rbrace$,
	for each ${v_i \in V(G_i)}$.
	The strong product~$G_{\alpha} = \otimes_{i=1}^{m}\blowup{G_i}{k_i}$
	is a graph on ${K \cdot N}$ vertices, where ${K \triangleq\prod_{i=1}^{m} \cdot k_i}$.
	The vertex set is formed as the cartesian product $V(G_1[k_1]) \times \cdots \times V(G_m[k_m])$: 
	the set of all $m$-tuples $(\subpacket{v_1}{j_1}, \hdots, \subpacket{v_m}{j_m})$ whose $i$th entry is a vertex 
	of $\blowup{G_i}{k_i}$.
	
	Similarly, the strong product $\otimes_{i=1}^{m}G_i$
	is a graph on a set of $N$ vertices.
	Each vertex corresponds to an $m$-tuple
	$\left( v_1, \hdots, v_m\right)$, where $v_i \in V(G_i)$.
	Its $K$-blowup is a graph $G_{\beta}$ on $K\cdot N$ vertices:
	$K$ vertices
	$\subpacket{(v_1, \hdots, v_m )}{j_1, \hdots, j_m}$, $j_i \in \lbrace 1, \hdots, k_i \rbrace$ for each vertex $(v_1, \hdots, v_m) \in V(\otimes_{i=1}^{m}G_i)$.
	Observe that we use an $m$-tuple $(j_1, \hdots, j_m)$ to enumerate the $K$ vertices in $G_{\beta}$
	originating from a single vertex of the strong product $\otimes_{i=1}^{m}G_i$.
	This alternative labeling establishes the one-to-one mapping between the vertices of $G_{\alpha}$ and $G_{\beta}$:
	\begin{align}
	 f\mathopen{}\bigl(\;
	      (
		 \subpacket{v_1}{j_1}, 
		 \hdots, 
		 \subpacket{v_m}{j_m}
	      )
	      \;\bigr)\mathclose{}
	      \;=\;
	      \subpacket{\left(v_1, \hdots, v_m\right)}{j_1, \hdots, j_m}.
	      \nonumber
	\end{align}
	
	It remains to show that two vertices $u=(\subpacket{u_1}{j_1}, \hdots, \subpacket{u_m}{j_m})$
	and $v= (\subpacket{v_1}{l_1}, \hdots, \subpacket{v_m}{l_m})$ in $G_{\alpha}$ are connected with an edge
	only if $f(u),f(v)$ in $G_{\beta}$ are also connected.
	By construction,
	an edge between $u$ and $v$ exists if an only if $\forall\; i \in \lbrace{1, \hdots m}\rbrace$:
	\begin{align}
		 {\subpacket{u_i}{j_i} = \subpacket{v_i}{l_i}}
		 \quad \vee \quad
		 \bigl( \subpacket{v_i}{j_i}, \subpacket{v_i}{l_i} \bigr) \in E\mathopen{}\left( \blowup{G_i}{k_i} \right).
		 \label{eq:condition-edge-in-H}
	\end{align}
	Note that, 
	${\subpacket{u_i}{j_i} = \subpacket{v_i}{l_i}}  \Rightarrow  {u_i = v_i \in V(G_i)}$.
	Further, an edge between $\subpacket{u_i}{j_i}$ and $\subpacket{v_i}{l_i}$ exists in $\blowup{G_i}{k_i}$ 
	if and only if an edge between $u_i$ and $v_i$ exists in $G_i$.
	That is,
	\begin{align}
	   ( \subpacket{u_i}{j_i}, \subpacket{v_i}{l_i} ) \in E\left( \blowup{G_i}{k_i} \right)
	   \quad
	   \Leftrightarrow
	   \quad
	   \left( u_i ,v_i \right) \in E\mathopen{}\left(G_i\right).
	   \nonumber
	\end{align}
	Hence, the existence of an edge between $u$ and $v$ in $G_{\alpha}$, implies that ${\forall i \in \lbrace{1, \hdots m}\rbrace}$,
	\begin{align}
	   u_i = v_i
	   \;
	   \vee
	   \;
	   \left( u_i ,v_i \right) \in E(G_i),\quad
	   u_i, v_i \in V(G_i).
	   \nonumber
	\end{align}
	It follows that $\otimes_{i=1}^{m}G_i$ contains an edge between $(u_1, \hdots, u_m)$ and $(v_1, \hdots, v_m)$,
	and in turn $G_{\beta}$ contains an edge between
	${f(u) =\subpacket{(u_1,\hdots, u_m)}{j_1, \hdots, j_m}}$ and 
	${f(v) =\subpacket{(v_1, \hdots, v_m)}{l_1, \hdots, l_m}}$, which completes the proof.
\end{proof}
\begin{corollary}
	\label{corollary:indep-set-mais-inequalities}
	For the graphs $G_{\alpha}$ and $G_{\beta}$ of Lemma~\ref{lemma:subgraph},
	\begin{enumerate}
		\item \label{part-indep-set} $\alpha \left(G_{\alpha}\right)    \ge  \alpha \left( G_{\beta}\right)$, 
	   	\item \label{part-mais} $\mais{ G_{\alpha}}    \ge  \mais{ G_{\beta}}$.
	   \end{enumerate}
\end{corollary}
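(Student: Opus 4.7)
The plan is to derive both inequalities directly from Lemma~\ref{lemma:subgraph}, which tells us that $G_\alpha$ is isomorphic to a \emph{spanning} subgraph of $G_\beta$ via the bijection $f:V(G_\alpha)\to V(G_\beta)$. The whole point is that $f$ preserves the vertex set (up to labeling) while $G_\alpha$ has, at most, the edges that $G_\beta$ has under this identification. So, informally, $G_\alpha$ is just $G_\beta$ with some edges possibly deleted, and both graph parameters in question are monotone non-increasing under edge additions.

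For part~\ref{part-indep-set}, I would start with a maximum independent set $I \subseteq V(G_\beta)$, so $|I|=\alpha(G_\beta)$. Set $\widehat{I} \triangleq f^{-1}(I)\subseteq V(G_\alpha)$; since $f$ is a bijection, $|\widehat{I}|=|I|$. To see that $\widehat{I}$ is independent in $G_\alpha$, suppose toward contradiction that some pair $u,v \in \widehat{I}$ has a directed edge in $G_\alpha$. By Lemma~\ref{lemma:subgraph}, that same edge appears between $f(u),f(v)\in I$ in $G_\beta$, contradicting independence of $I$ (which forbids edges in either direction). Hence $\alpha(G_\alpha) \ge |\widehat{I}| = \alpha(G_\beta)$.

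For part~\ref{part-mais}, I would follow the same template with an acyclic induced subgraph. Let $\widehat{V}\subseteq V(G_\beta)$ achieve $\mais{G_\beta}$, i.e., $G_\beta[\widehat{V}]$ is acyclic with $|\widehat{V}|=\mais{G_\beta}$. Consider $f^{-1}(\widehat{V})\subseteq V(G_\alpha)$. By Lemma~\ref{lemma:subgraph}, the edge set of $G_\alpha[f^{-1}(\widehat{V})]$ maps under $f$ to a subset of the edge set of $G_\beta[\widehat{V}]$. Any directed cycle in $G_\alpha[f^{-1}(\widehat{V})]$ would then map to a directed cycle in $G_\beta[\widehat{V}]$, contradicting acyclicity. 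Therefore $G_\alpha[f^{-1}(\widehat{V})]$ is acyclic, yielding $\mais{G_\alpha}\ge |f^{-1}(\widehat{V})|=\mais{G_\beta}$.

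There is no real obstacle here; the corollary is essentially just the monotonicity of $\alpha(\cdot)$ and $\mathsf{MAIS}(\cdot)$ under removal of edges, applied to the subgraph relation supplied by Lemma~\ref{lemma:subgraph}. The only minor point requiring care is remembering that independence in a digraph means no edge in \emph{either} direction (so the argument doesn't need any orientation-specific tweak), and that acyclicity of an induced subgraph is preserved when edges are deleted (which is immediate since cycles in a subgraph are cycles in the supergraph).
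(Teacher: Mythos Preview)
Your proof is correct and follows essentially the same approach as the paper: both take a maximum independent set (resp.\ maximum acyclic vertex set) in $G_\beta$, pull it back via the bijection $f^{-1}$ to $G_\alpha$, and use the spanning-subgraph relation from Lemma~\ref{lemma:subgraph} to conclude the preimage is still independent (resp.\ acyclic). Your write-up is actually slightly more explicit than the paper's in spelling out the contradiction step and noting why orientation does not matter for independence.
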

\begin{proof}
   The corollary is a straightforward consequence of the fact
   that $G_{\alpha}$ is a spanning subgraph of $G_{\beta}$:
   \paragraph*{Part (\textit{i})}
   Consider the maximum independent set~$\mathcal{I}(G_{\beta})$
   of~$G_{\beta}$.
   Since $G_{\alpha}$ is a spanning subgraph of $G_{\beta}$,
   no edge exists between the vertices in the inverse image $f^{-1}\left( \mathcal{I}(G_{\beta})\right)$ of $\mathcal{I}(G_{\beta})$.
   In other words, $f^{-1}\left( \mathcal{I}(G_{\beta})\right)$ is an independent set in~$G_{\alpha}$.
   Given that $f(\cdot)$ is a bijection, we conclude that $|\mathcal{I}(G_{\alpha})| \ge |f^{-1}\left( \mathcal{I}(G_{\beta})\right)| =\mathcal{I}(G_{\beta})$.
   \paragraph*{Part (\textit{ii})}
   Let $\mathcal{A}(G)$ denote the largest subset of vertices in $G$ such that the subgraph induced by $\mathcal{A}(G)$ is acyclic.
   Consider the vertices of $G_{\alpha}$ in the inverse image of $\mathcal{A}(G_{\beta})$, $f^{-1}\mathopen{}\left(\mathcal{A}(G_{\beta})\right)$.
   The corresponding subgraph of $G_{\alpha}$ contains no cycle.
   Given that $f(\cdot)$ is a bijection, we conclude that 
   $\mais{G_{\alpha}} 
    \ge |f^{-1}\mathopen{}\left( \mathcal{A}(G_{\beta})\right)| =\mathcal{A}(G_{\beta})$.
\end{proof}
\begin{lemma}
   \label{lemma:mais-of-strong-product}
   For any two directed graphs $G$ and $H$,
   \begin{align}
      \mais{G \otimes H}
      \ge
      \mais{G} \cdot \mais{H}. \nonumber
   \end{align}
\end{lemma}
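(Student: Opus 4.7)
The plan is to exhibit an explicit large acyclic induced subgraph of $G \otimes H$, built as a Cartesian product of optimal acyclic vertex sets of $G$ and $H$. Concretely, I would fix subsets $A \subseteq V(G)$ and $B \subseteq V(H)$ with $|A| = \mais{G}$, $|B| = \mais{H}$ such that the induced subgraphs $G[A]$ and $H[B]$ are both acyclic, and then argue that the subgraph of $G \otimes H$ induced by $A \times B$ is acyclic as well. This immediately gives $\mais{G \otimes H} \ge |A \times B| = \mais{G} \cdot \mais{H}$.

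To prove acyclicity of the induced subgraph on $A \times B$, I would take topological orderings $\pi_G : A \to \{1,\dots,|A|\}$ of $G[A]$ and $\pi_H : B \to \{1,\dots,|B|\}$ of $H[B]$, which exist by acyclicity. I would then define a lexicographic total order $\prec$ on $A \times B$ by $(u,v) \prec (u',v')$ iff either $\pi_G(u) < \pi_G(u')$, or $\pi_G(u) = \pi_G(u')$ and $\pi_H(v) < \pi_H(v')$. The key verification is that every edge of the induced subgraph on $A \times B$ respects $\prec$, in the sense that the tail is strictly smaller than the head.

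The bookkeeping amounts to checking the edge condition of Definition~\ref{def:strong_product_of_graphs}. An edge from $(u_1,v_1)$ to $(u_2,v_2)$ in the induced subgraph requires $u_1 = u_2$ or $(u_1,u_2) \in E(G)$, and $v_1 = v_2$ or $(v_1,v_2) \in E(H)$. Excluding the trivial case $(u_1,v_1) = (u_2,v_2)$, at least one coordinate differs; in that coordinate the corresponding edge lies in $G[A]$ or $H[B]$, so the topological ordering gives a strict inequality there, while the other coordinate is either equal or strictly increasing by the same argument. Going through the three non-trivial cases (both coordinates differ, only the first differs, only the second differs) shows $(u_1,v_1) \prec (u_2,v_2)$ in each case.

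Since every edge of the induced subgraph on $A \times B$ points strictly forward in the total order $\prec$, the subgraph is acyclic, which completes the proof. The main ``obstacle'' is really only the case analysis on the edge condition of the strong product; no delicate combinatorial argument is needed, and in particular the result does not need to invoke the blowup lemma or Lemma~\ref{lemma:subgraph}. One could alternatively view it as a direct application of the fact that topological orderings compose well under the lexicographic product, but the explicit lexicographic certificate seems cleanest.
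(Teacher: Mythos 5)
Your proof is correct and uses the same overall construction as the paper --- the induced subgraph of $G \otimes H$ on $A \times B$, where $A \subseteq V(G)$ and $B \subseteq V(H)$ are maximum acyclic induced vertex sets --- but it certifies acyclicity by a genuinely different route. The paper argues by contradiction: it assumes a directed cycle $(u_0,v_0) \to \cdots \to (u_{m-1},v_{m-1}) \to (u_0,v_0)$ inside the induced subgraph, projects onto each coordinate, observes that at least one projected sequence (after collapsing repeated consecutive vertices) is a nontrivial cyclic walk, and contradicts the acyclicity of $G[A]$ or $H[B]$. You instead give a direct, positive certificate: a lexicographic total order on $A \times B$ built from topological orders of $G[A]$ and $H[B]$, together with a short case analysis showing that every strong-product edge strictly increases that order. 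Your route is arguably cleaner in that it avoids the slightly delicate ``collapse repeated vertices'' step and makes explicit that topological orders compose under $\otimes$; the paper's contradiction-and-projection template, on the other hand, is reused almost verbatim in the proof of Lemma~\ref{alphalemma}, which explains the authors' stylistic choice. Both proofs are sound.
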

\begin{proof}
Let $\mathcal{A}(G) \subseteq V(G)$ denote the largest set of vertices such that the subgraph of $G$ induced by $\mathcal{A}(G)$ is acyclic, \textit{i.e.}, $|\mathcal{A}(G)|= \mais{G}$.
Consider the set
\begin{align}
   \mathcal{S} \triangleq
   \left\lbrace (u, v) : u \in \mathcal{A}(G), v \in \mathcal{A}(H) \right\rbrace
   \;\subseteq \;V(G \otimes H).
   \nonumber
\end{align}
We show that the subgraph of $G \otimes H$ induced by $\mathcal{S}$ is acyclic.   
Assume for the sake of contradiction that there exists a subset $\mathcal{C} \subseteq \mathcal{S}$ of $m > 1$ vertices that lie on a directed cycle: $(u_0, v_0) \rightarrow \cdots \rightarrow (u_{m-1}, v_{m-1}) \rightarrow (u_0, v_0)$.
The existence of edges between consecutive vertices implies that the following is true for all $i \in \lbrace 0, \hdots, m-1 \rbrace$:
\begin{align}
   u_i = u_{i+1\bmod m} \;\vee\; (u_i,\;u_{i+1 \bmod m}) \in E(G). \nonumber
\end{align}
Hence, the sequence of vertices $u_0, \hdots, u_{m-1}, u_0$ in $\mathcal{A}(G)$ either consists of a single vertex repeating $m+1$ times, or forms a cyclic path in $G$, ignoring transitions where consecutive vertices in the sequence are the same vertex in $G$.
The same holds for the sequence of vertices $v_0, \hdots, v_{m-1}, v_0$ in $\mathcal{A}(H)$.
At least one of the two sequences must contain more than one distinct vertices; otherwise, $|\mathcal{C}|=1$.
Therefore, either the subgraph of $G$ induced by $\mathcal{A}(G)$ or the subgraph of $H$ induced by $\mathcal{A}(H)$ contains a cycle, contradicting the definition of $\mathcal{A}(\cdot)$.
We conclude that the subgraph of $G \otimes H$ induced by $\mathcal{S}$ is acyclic and $\mais{G \otimes H} \ge |\mathcal{S}| =  \mais{G} \cdot \mais{H}$.
\end{proof}
\begin{lemma}
	\label{alphalemma}
	For any directed graph $G$ and its $k$-blowup $\blowup{G}{k}$,
	\begin{enumerate}
		\item \label{alphalemma:part-indep-set} $ \alpha (\blowup{G}{k}) = k \cdot \alpha(G)$,
	   	\item \label{alphalemma:part-mais} $\mais{\blowup{G}{k}} = k \cdot \mais{G}$.
	\end{enumerate}
\end{lemma}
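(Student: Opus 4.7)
The plan is to prove both equalities via matching lower and upper bounds, exploiting the bijection $V(\blowup{G}{k}) \leftrightarrow V(G) \times [k]$ together with the defining property that $(u^{(i)}, v^{(j)}) \in E(\blowup{G}{k})$ if and only if $(u,v) \in E(G)$ (which in particular forces $u \ne v$, since blowup classes $\{v^{(1)}, \ldots, v^{(k)}\}$ carry no internal edges).

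For the $\ge$ directions, I would \emph{lift} optimal structures from $G$ to $\blowup{G}{k}$. Take a maximum independent set $I \subseteq V(G)$ with $|I| = \alpha(G)$, and form $\widehat{I} \triangleq \{v^{(i)} : v \in I,\, i \in [k]\}$. No two elements of $\widehat{I}$ are adjacent: vertices in the same class are never adjacent, and vertices $u^{(i)}, v^{(j)}$ in distinct classes $u \ne v$ would require $(u,v) \in E(G)$, contradicting independence of $I$. Hence $\alpha(\blowup{G}{k}) \ge |\widehat{I}| = k\cdot \alpha(G)$. For part~\ref{alphalemma:part-mais}, take a maximum acyclic-induced set $A \subseteq V(G)$ with $|A| = \mais{G}$ and form $\widehat{A}$ analogously. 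Any cycle through $\widehat{A}$ in $\blowup{G}{k}$ would have consecutive vertices in distinct blowup classes (since intra-class edges are absent), so projecting yields a closed walk $v_1 \to v_2 \to \cdots \to v_m \to v_1$ in $G[A]$ with consecutive distinct vertices, and any closed directed walk contains a directed cycle, contradicting acyclicity of $G[A]$.

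For the $\le$ directions, I would \emph{project}. Given a maximum independent set $J \subseteq V(\blowup{G}{k})$, let $J' \triangleq \{v \in V(G) : v^{(i)} \in J \text{ for some } i\}$. Each vertex of $G$ contributes at most $k$ elements to $J$, so $|J| \le k\cdot|J'|$; and $J'$ is independent in $G$, because an edge $(u,v) \in E(G)$ with $u, v \in J'$ would produce an edge between any chosen copies $u^{(i)}, v^{(j)} \in J$. Hence $\alpha(\blowup{G}{k}) \le k\cdot \alpha(G)$. Identically, for a maximum acyclic-induced $B \subseteq V(\blowup{G}{k})$, let $B'$ be its projection. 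Any directed cycle $v_1 \to \cdots \to v_m \to v_1$ in $G[B']$ (with distinct $v_\ell$'s) lifts back by selecting \emph{arbitrary} copies $v_\ell^{(i_\ell)} \in B$: consecutive lifted vertices are in distinct classes connected by an edge of $G$, hence adjacent in $\blowup{G}{k}$, so we obtain a cycle in $\blowup{G}{k}[B]$, contradicting acyclicity. Therefore $B'$ is acyclic in $G$, giving $\mais{\blowup{G}{k}} \le k \cdot |B'| \le k \cdot \mais{G}$.

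The only step requiring any care is the cycle-lifting/projection for part~\ref{alphalemma:part-mais}: one must verify that consecutive vertices in the asserted cycle really do lie in distinct blowup classes (guaranteed by the absence of intra-class edges in $\blowup{G}{k}$) before invoking the lift-by-choosing-copies argument. Once this observation is in hand, both halves of both statements follow by essentially the same two-line argument, and the lemma is immediate.
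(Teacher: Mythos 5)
Your proof is correct and follows the same lift-and-project strategy as the paper's: lift a maximum independent (resp.\ acyclic-induced) set from $G$ to $\blowup{G}{k}$ for the lower bounds, and project a maximum such set from $\blowup{G}{k}$ down to $G$ for the upper bounds, noting that each vertex of $G$ has at most $k$ copies. Your handling of part~\ref{alphalemma:part-mais} is in fact slightly more careful than the paper's: you explicitly observe that the projection of a cycle in $\blowup{G}{k}$ may revisit vertices and is a priori only a closed walk (which still contains a directed cycle), whereas the paper passes directly from a cycle in $\blowup{G}{k}$ to a cycle in $G$ without this intermediate remark.
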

\begin{proof}
   By construction $\blowup{G}{k}$ contains $k$ distinct unconnected vertices ${\subpacket{v}{1}, \hdots, \subpacket{v}{k}}$ for each vertex $v \in V(G)$.
   \paragraph*{Part \textit{(i)}}
   Consider the set  
   \begin{align}
   \mathcal{S} \triangleq \bigl\lbrace v \in V(G) : \subpacket{v}{i} \in \ind{\blowup{G}{k}}, i \in [k] \bigr\rbrace.
   \end{align}
   $\mathcal{S}$ is an independent set in $G$:
   if an edge exists between two vertices $u, v \in \mathcal{S}$ in $G$,
   then $(\subpacket{v}{i},\; \subpacket{u}{j}) \in E(\blowup{G}{k}), \; \forall i, j \in [k]$, contradicting the fact that $\subpacket{v}{i}, \subpacket{u}{j} \in \ind{\blowup{G}{k}}$ for some $i, j \in [k]$.
   Taking into account that at most $k$ vertices in $\mathcal{I}(\blowup{G}{k})$ (and generally in $\blowup{G}{k}$) originate from a single vertex in $G$, we conclude that
   \begin{align}
      \alpha(G) \ge |\mathcal{S}| = \tfrac{1}{k} \cdot \alpha(\blowup{G}{k}).
      \label{eq:indG-gt-indGk-over-k}
   \end{align}     
   $ \mathcal{T}\triangleq
	    \bigl\lbrace
		  \subpacket{v}{i} \in V(\blowup{G}{k}), \; i = 1, \hdots, k: v \in \ind{G}
	    \bigr\rbrace $
   is an independent set in $\blowup{G}{k}$.
   By construction, no edge exists between $\subpacket{v}{i}$ and $\subpacket{v}{j}$, $i \neq j$.
   Further, an edge between two vertices $\subpacket{v}{i}, \subpacket{u}{j} \in \mathcal{T}$ in $\blowup{G}{k}$ exists only if an edge exists between $v$ and $u$ in $G$. But $v, u \in \ind{G}$.
   Thus,
   \begin{align}
      \alpha(\blowup{G}{k}) \ge |\mathcal{T}| = k \cdot \alpha(G).
      \label{eq:indGk-gt-indG-times-k}
   \end{align}
   The desired result follows from \eqref{eq:indG-gt-indGk-over-k} and \eqref{eq:indGk-gt-indG-times-k}.
   \paragraph*{Part \textit{(ii)}}
   Let $\mathcal{A}(G) \subseteq V(G)$ denote the largest set of vertices such that the subgraph of $G$ induced by $\mathcal{A}(G)$ is acyclic, \textit{i.e.}, $\mais{G} = |\mathcal{A}(G)|$.
   Consider the set
   \begin{align}
	    \mathcal{S} \triangleq
	    \bigl\lbrace
		  v \in V(G) : \subpacket{v}{i} \in \mathcal{A}(\blowup{G}{k}), i \in [k]
	    \bigr\rbrace.
	    \nonumber
   \end{align}
   The subgraph of $G$ induced by $\mathcal{S}$ contains no cycle.
   To verify that, observe that if there exists a subset
   $\mathcal{C} \subseteq \mathcal{S}$ of $m$ vertices forming a directed cycle
   $v_1 \rightarrow \cdots \rightarrow v_m \rightarrow v_1$,
   then $\subpacket{v_1}{i_1} \rightarrow \cdots \rightarrow \subpacket{v_m}{i_m} \rightarrow \subpacket{v_1}{i_1}$ forms a cycle in $\blowup{G}{k}$ for any combination of $i_1, \hdots, i_m$.
   Therefore, a cycle in $\blowup{G}{k}$ can be formed using vertices in $\mathcal{A}(\blowup{G}{k})$, contradicting the definition of $\mathcal{A}(\cdot)$.
   It follows that 
   \begin{align}
      \mais{G} \ge |\mathcal{S}| 
      = 
      \tfrac{1}{k}  \cdot \mais{ \blowup{G}{k}}.
      \label{eq:maisG-gt-maisGk-over-k}
   \end{align}
   Conversely, the set
   \begin{align}
	    \mathcal{T} \triangleq
	    \bigl\lbrace
		  \subpacket{v}{i} \in V(\blowup{G}{k}), \;  i=1, \hdots, k: v \in \mathcal{A}(G)
	    \bigr\rbrace
	    \nonumber
   \end{align}
   induces an acyclic subgraph in $\blowup{G}{k}$:
   if there exist a subset of vertices in $\mathcal{T}$ forming a cycle 
   $\subpacket{v}{i} \rightarrow \subpacket{u}{j} \rightarrow \cdots \rightarrow \subpacket{v}{i}$ in $\blowup{G}{k}$,
   then 
   ${v \rightarrow u \rightarrow \cdots \rightarrow v}$ is a cycle in $G$, 
   contradicting the fact that ${v, u, \hdots \in \mathcal{A}(G)}$.
   Thus,
   \begin{align}
      \mais{\blowup{G}{k}} \ge |\mathcal{T}| = k \cdot \mais{G}.
      \label{eq:maisGk-gt-maisG-times-k}
   \end{align}
      The desired result follows from~\eqref{eq:maisG-gt-maisGk-over-k} and~\eqref{eq:maisGk-gt-maisG-times-k}.
\end{proof}
\begin{lemma}
   \label{lemma:blowup-minrnk-gt-mais-product}
   For any $k$, $m \in \mathbb{N}_{+}$,
   \begin{align}
	 \tfrac{1}{k} \cdot \minrank_q\left(\blowup{G}{k}\right)
	 \ge
	 \mais{G^m}^{1/m}.
	 \nonumber
   \end{align}
\end{lemma}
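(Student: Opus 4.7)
The plan is to tensorize an optimal rank-minimizing matrix for $\blowup{G}{k}$ and combine the resulting rank bound with the fact that $\minrank_q(H) \ge \mais{H}$ for every digraph $H$. The graph-theoretic machinery already developed in the Appendix---Lemma~\ref{lemma:subgraph}, Corollary~\ref{corollary:indep-set-mais-inequalities}, and Lemma~\ref{alphalemma}---will let me pass between blowups of strong products and strong products of blowups.

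First, let $A$ be a matrix fitting $\blowup{G}{k}$ with $\rank_q(A) = \minrank_q(\blowup{G}{k})$, and consider the Kronecker power $A^{\otimes m}$. Its diagonal entries are products of nonzero diagonals of $A$ and hence nonzero; an off-diagonal entry of $A^{\otimes m}$ is nonzero only when each coordinate pair is either equal or an edge of $\blowup{G}{k}$, which is precisely the edge condition defining the strong product $\otimes^{m}\blowup{G}{k}$. Thus $A^{\otimes m}$ fits $\otimes^{m}\blowup{G}{k}$, and since rank is multiplicative under Kronecker products,
$$
   \minrank_q\bigl(\otimes^{m}\blowup{G}{k}\bigr) \;\le\; \rank_q(A)^{m} \;=\; \minrank_q(\blowup{G}{k})^{m}.
$$

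Second, I would apply Lemma~\ref{lemma:subgraph} with $G_i = G$ and $k_i = k$ for all $i$ to deduce that $\otimes^{m}\blowup{G}{k}$ is a spanning subgraph of $\blowup{(\otimes^{m}G)}{k^{m}}$. Corollary~\ref{corollary:indep-set-mais-inequalities}(ii) then yields $\mais{\otimes^{m}\blowup{G}{k}} \ge \mais{\blowup{(\otimes^{m}G)}{k^{m}}}$, and Lemma~\ref{alphalemma}(ii) evaluates the latter as $k^{m}\cdot \mais{\otimes^{m}G}$. Combining with the standard inequality $\minrank_q(H)\ge\mais{H}$---a topological ordering of an acyclic induced vertex set renders the corresponding principal submatrix of any fitting matrix triangular with nonzero diagonal, hence invertible---I obtain
$$
   \minrank_q(\blowup{G}{k})^{m} \;\ge\; \minrank_q\bigl(\otimes^{m}\blowup{G}{k}\bigr) \;\ge\; \mais{\otimes^{m}\blowup{G}{k}} \;\ge\; k^{m}\cdot \mais{\otimes^{m}G}.
$$
Taking $m$-th roots and dividing by $k$ delivers the claim.

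The main obstacle is not technical depth but orientation: I need to verify that the spanning-subgraph inclusion in Lemma~\ref{lemma:subgraph} points in the direction that \emph{increases} MAIS (dropping edges can only enlarge acyclic induced sets), and that the Kronecker power of a fitting matrix lands inside the \emph{strong}, rather than the categorical, product. Both checks are routine once phrased precisely, so no substantive new idea beyond the already-established blowup and strong-product calculus is required.
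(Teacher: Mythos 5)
Your proof follows exactly the same chain of inequalities as the paper's: submultiplicativity of $\minrank_q$ under strong products, $\minrank_q \ge \mais{\cdot}$, Corollary~\ref{corollary:indep-set-mais-inequalities}, and Lemma~\ref{alphalemma}. The only difference is that you inline short proofs of the two facts the paper simply cites from~\cite{lubetzky2009nonlinear} and~\cite{bar2011index} (Kronecker power of a fitting matrix, and the triangular-submatrix argument), which are both correct.
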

\begin{proof}
	We have
	\begin{align}
		\minrank_q^m{\left(\blowup{G}{k}\right)}
		&\stackrel{(\alpha)}{\ge}
		\minrank_q{\left(\otimes^m  (\blowup{G}{k})\right)}
		\;\stackrel{(\beta)}{\ge}
		\mais{\otimes^m  (\blowup{G}{k})}\nonumber \\
		&\stackrel{(\gamma)}{\ge}
		\mais{\blowup{(\otimes^m G)}{k^m}}
		\;\stackrel{(\delta)}{\ge}
		k^m \cdot \mais{\otimes^m G}, \nonumber
	\end{align}
	where 
	$(\alpha)$ follows from the submultiplicativity of the $\textsf{minrk}_q(\cdot)$ function \cite{lubetzky2009nonlinear},
	$(\beta)$ from 
	\cite{bar2011index},
	$(\gamma)$ follows from Corollary~\ref{corollary:indep-set-mais-inequalities}, and 
	$(\delta)$  from Lemma~\ref{alphalemma}.
\end{proof}
\begin{lemma}
   \label{lem:bvl-uncertainty}
   $
	 \betavl^{\mathbb{F}}(G) \cdot
	 \betavl^{\mathbb{F}}(\overline{G})
	 \ge
	 n
	 \nonumber.
   $
\end{lemma}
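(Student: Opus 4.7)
\medskip\noindent\textbf{Proof plan for Lemma~\ref{lem:bvl-uncertainty}.}
The plan is to reduce the uncertainty inequality to a minrank inequality on blowups, and then to exhibit an explicit large acyclic set via a ``diagonal'' construction inside a strong product.

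First I would use the standard identification (implicit in the developments leading to Lemma~\ref{lemma:blowup-minrnk-gt-mais-product}) that a vector-linear index code over $\mathbb{F}$ with $p$ subsymbols per message is exactly a square matrix fitting the blowup $\blowup{G}{p}$: the $np$ scalar subsymbols form a scalar index-coding instance whose side-information graph is precisely $\blowup{G}{p}$. This gives $\betavl^{\mathbb{F}}(G)=\inf_{p\in\mathbb{N}_{+}}\minrank_{\mathbb{F}}(\blowup{G}{p})/p$, so it suffices to prove that for every $p,p'\in\mathbb{N}_{+}$,
\[
   \minrank_{\mathbb{F}}(\blowup{G}{p}) \cdot \minrank_{\mathbb{F}}(\blowup{\overline{G}}{p'}) \;\ge\; n\,p\,p',
\]
after which dividing by $pp'$ and taking infima over $p,p'$ yields the claim.

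Next I would pass from these minranks into a strong product via the Kronecker product of fitting matrices: if $A$ fits $H_1$ and $B$ fits $H_2$, then $A\otimes B$ fits the strong product $H_1\otimes H_2$ and has rank $\rank(A)\cdot\rank(B)$, so $\minrank(H_1\otimes H_2)\le \minrank(H_1)\cdot\minrank(H_2)$. Combined with the standard lower bound $\minrank(H)\ge \mais{H}$ (the MAIS workhorse used in the proof of Lemma~\ref{lemma:blowup-minrnk-gt-mais-product}), the entire task collapses to showing
\[
   \mais{\blowup{G}{p}\otimes\blowup{\overline{G}}{p'}} \;\ge\; n\,p\,p'.
\]

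The crux---and what I expect to be the only nontrivial step---is to exhibit such an acyclic set explicitly as the ``diagonal''
\[
   D \;\triangleq\; \bigl\{\,\bigl((v,i),(v,j)\bigr)\,:\, v\in V(G),\; i\in[p],\; j\in[p']\,\bigr\},
\]
of size $npp'$. For any two distinct elements $((v,i),(v,j))$ and $((u,k),(u,l))$ of $D$, strong-product adjacency would require, in each coordinate, either coincidence or a blowup edge. If $v\neq u$, this forces $(v,u)\in E(G)\cap E(\overline{G})=\emptyset$, a contradiction; if $v=u$, the absence of self-loops in $G$ and $\overline{G}$ forces $i=k$ \emph{and} $j=l$, contradicting distinctness. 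Hence no edge of $\blowup{G}{p}\otimes\blowup{\overline{G}}{p'}$ lies inside $D$, so $D$ is independent and therefore acyclic. The key conceptual point is that the vector-linear factor must come from letting the two blowup indices $p,p'$ vary \emph{independently}: a naive Hadamard-product argument on matrices fitting $\blowup{G}{p}$ and $\blowup{\overline{G}}{p}$ only delivers $\minrank\cdot\minrank\ge np$, losing a factor of $p$ after dividing by $p^2$; the product diagonal $D$ over both index sets is exactly what recovers the missing factor.
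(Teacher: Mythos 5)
Your proof is correct and rests on the same three pillars as the paper's own argument: the characterization $\betavl^{\mathbb{F}}(G)=\inf_{p}\minrank_{\mathbb{F}}(\blowup{G}{p})/p$ (Theorem~\ref{vectorlinear}), submultiplicativity of $\minrank$ under strong products, and the lower bound of $\minrank$ by independence number, followed by a diagonal construction. The difference is in how the independence bound $\alpha\bigl(\blowup{G}{p}\otimes\blowup{\overline{G}}{p'}\bigr)\ge npp'$ is obtained. The paper routes through two intermediate lemmas: it first passes to the spanning subgraph $\blowup{(G\otimes\overline{G})}{pp'}$ via Corollary~\ref{corollary:indep-set-mais-inequalities}, then factors out the blowup via Lemma~\ref{alphalemma} to reduce to $\alpha(G\otimes\overline{G})\ge n$, where the diagonal $\{(u,u)\}$ lives. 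You instead build the ``fat'' diagonal $D=\{((v,i),(v,j)):v\in V(G),\,i\in[p],\,j\in[p']\}$ directly inside $\blowup{G}{p}\otimes\blowup{\overline{G}}{p'}$ and verify its independence from scratch; this collapses the chain into a single elementary step, correctly exploiting that the two blowup indices $p,p'$ may vary independently (which is exactly what the paper's $k_1\cdot k_2$ bookkeeping achieves). Your verification is sound: for $v\ne u$, adjacency would force an edge in both $E(G)$ and $E(\overline{G})$ between $v$ and $u$, impossible; for $v=u$, the absence of self-loops in the blowup factors forces $i=k$ and $j=l$. The paper's detour has the advantage of being modular---Corollary~\ref{corollary:indep-set-mais-inequalities} and Lemma~\ref{alphalemma} are reused in the proofs of Lemma~\ref{lemma:blowup-minrnk-gt-mais-product} and Theorem~\ref{Shannonbound}---whereas yours is a self-contained shortcut for this lemma alone; both are valid.
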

\begin{proof}
	For any $k_1, k_2 \in \mathbb{N}_{+}$,
	 \begin{align}
	   \minrank_{\mathbb{F}}& \left( \blowup{G}{k_1} \right) \cdot
	   \minrank_{\mathbb{F}} \left( \blowup{\overline{G}}{k_2} \right) \nonumber \\
	   &\stackrel{(\alpha)}{\ge}
	   \minrank_{\mathbb{F}} \left( \blowup{G}{k_1} \otimes  \blowup{\overline{G}}{k_2}\right)
	   \stackrel{(\beta)}{\ge}
	   \alpha \left( \blowup{G}{k_1} \otimes  \blowup{\overline{G}}{k_2}\right) \nonumber \\
	   &\stackrel{(\gamma)}{\ge}
	   \alpha \left( \blowup{(G \otimes \overline{G})}{k_1 \cdot k_2}\right)
	   \stackrel{(\delta)}{\ge}
	   k_1 \cdot k_2 \cdot \alpha \left(G \otimes \overline{G}\right) \nonumber \\
	   &\stackrel{}{\ge}
	   k_1 \cdot k_2 \cdot n, \label{eq:lb-on-prod-minrk}
	 \end{align}
	where 
	$(\alpha)$ follows from the submultiplicativity of the $\textsf{minrk}_q$ function \cite{lubetzky2009nonlinear},
	$(\beta)$ from Lemma~\ref{lemma:blowup-minrnk-gt-mais-product} and the fact that $\mais{G} \ge \alpha(G)$, 
	$(\gamma)$ from Corollary~\ref{corollary:indep-set-mais-inequalities}, and 
	$(\delta)$  from Lemma~\ref{alphalemma}.
	The last inequality follows from the fact that $\alpha \left(G \otimes \overline{G}\right) \ge n$.
	To verify that, observe that the set $\lbrace (u, u): u \in V(G) \rbrace \subset V(G \times \overline{G})$ forms an independent set of cardinality $n$ in $G \times \overline{G}$.
	Inequality~\eqref{eq:lb-on-prod-minrk} holds for all $k_1$ and $k_2$.
	The desired result follows from Theorem~\ref{vectorlinear}.
\end{proof}
Extending the arguments for optimality of scalar $\minrank$ in \cite{bar2011index} for scalar linear index coding, it can be shown that the optimum broadcast rate of vector-linear index codes over all fields can be shown to be infimum of the normalized minranks of graph blowups.
\begin{theorem}[\cite{blasiak2013graph}]
	\label{vectorlinear}
	  $\betavl^{\mathbb{F}} \left( G \right) 
	  = 
	  \inf \limits_{k}\frac{\minrank_{\mathbb{F}} \left( G[k] \right)}{k}$.
\end{theorem}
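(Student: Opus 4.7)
\textbf{Proof proposal for Theorem~\ref{vectorlinear}.}
The plan is to exhibit a transmission-preserving bijection between $(\mathbb{F},p,n,r)$ vector-linear index codes on $G$ and scalar linear index codes on the blowup $\blowup{G}{p}$, and then invoke the scalar characterization $\minrank_{\mathbb{F}}(H)=$~(min number of transmissions of a scalar linear index code over $\mathbb{F}$ on $H$) from~\cite{bar2011index}.

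First I would unpack the correspondence. In a vector-linear code we have $n$ messages $\mathbf{x}_u\in\mathbb{F}^{p}$; split each into its $p$ scalar subsymbols $\mathbf{x}_u^{(1)},\hdots,\mathbf{x}_u^{(p)}\in\mathbb{F}$, and replace the single user $u$ by $p$ users $\subpacket{u}{1},\hdots,\subpacket{u}{p}$ where $\subpacket{u}{i}$ demands the scalar $\mathbf{x}_u^{(i)}$. The original user $u$ has side information $\{\mathbf{x}_v : (u,v)\in E(G)\}$, which in scalar form equals $\{\mathbf{x}_v^{(j)} : (u,v)\in E(G),\,j\in[p]\}$, so every new user $\subpacket{u}{i}$ inherits this same side information set. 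By Definition~\ref{def:digraph-k-blowup}, the resulting side-information digraph on $np$ scalar users is precisely $\blowup{G}{p}$: an edge $(\subpacket{u}{i},\subpacket{v}{j})$ exists iff $(u,v)\in E(G)$, regardless of $i,j$.

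Next I would show that the two classes of codes match transmission-for-transmission. An $(\mathbb{F},p,n,r)$ vector-linear code consists of $r$ $\mathbb{F}$-linear combinations of the $np$ subsymbols comprising $\mathbf{x}$, and by Definition~\ref{def:mult-unicast-net-code} (index-coding analogue) decodability at user $u$ is exactly decodability of all of $\mathbf{x}_u^{(1)},\hdots,\mathbf{x}_u^{(p)}$ from those $r$ combinations and the side information. Under the decomposition above, this is the same as requiring, for each $i\in[p]$, that user $\subpacket{u}{i}$ in $\blowup{G}{p}$ can decode $\mathbf{x}_u^{(i)}$ from the $r$ transmissions and its side information. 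Conversely, any scalar linear code on $\blowup{G}{p}$ with $r$ transmissions reassembles (by grouping the $p$ copies of each original vertex) into an $(\mathbb{F},p,n,r)$ vector-linear code on $G$. Hence for each fixed $p$, the minimum $r$ over all $(\mathbb{F},p,n,r)$ vector-linear codes on $G$ equals the minimum number of transmissions of any scalar linear index code on $\blowup{G}{p}$, which by the Bar-Yossef et al.\ result is $\minrank_{\mathbb{F}}(\blowup{G}{p})$.

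Finally, since $\betavl^{\mathbb{F}}(G)=\inf\{r/p\}$ ranges over all feasible $(\mathbb{F},p,n,r)$ vector-linear codes, taking the inner minimum over $r$ for each $p$ and then the infimum over $p$ yields
\[
\betavl^{\mathbb{F}}(G)\;=\;\inf_{p\in\mathbb{N}_{+}} \frac{\minrank_{\mathbb{F}}(\blowup{G}{p})}{p},
\]
which is exactly the claimed identity (with $k$ playing the role of $p$). The one subtlety to check carefully is that the side-information graph of the decomposed problem is truly $\blowup{G}{p}$: one must verify both that no edge appears between $\subpacket{u}{i}$ and $\subpacket{u}{j}$ for $i\neq j$ (since $G$ has no self-loops, a user does not know its own message) and that the full biclique between $\{\subpacket{u}{i}\}_i$ and $\{\subpacket{v}{j}\}_j$ is present whenever $(u,v)\in E(G)$. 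This matches Definition~\ref{def:digraph-k-blowup} exactly and is the only nontrivial bookkeeping step; everything else reduces to the scalar minrank theorem.
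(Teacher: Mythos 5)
Your proof is correct and follows exactly the route the paper indicates in the sentence preceding the theorem ("extending the arguments for optimality of scalar minrank... for scalar linear index coding"): decompose a $p$-subsymbol vector-linear code on $G$ into a scalar linear code on the $p$-blowup $\blowup{G}{p}$, apply the Bar-Yossef et al.\ scalar $\minrank$ characterization, and take the infimum over $p$. The paper itself does not spell out a proof (it cites \cite{blasiak2013graph}), but your bookkeeping---including the check that $\subpacket{u}{i}$ and $\subpacket{u}{j}$ are nonadjacent in the blowup while full directed bicliques appear for each edge of $G$---is precisely what is needed to make that sketch rigorous.
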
 
\begin{theorem} \footnote{%
\cite{blasiak2013graph} contains proofs for Theorems~\ref{Shannonbound} and~\ref{thm:shannoncapbound}(i).
Lemma \ref{lem:bvl-uncertainty} can be also inferred, but is not explicitly stated.
We give a different proof based on strong products of graphs that also generalizes bounds by tensorizing MAIS(G) instead of just tensorizing $\alpha(G)$.}
\label{Shannonbound}
$\Theta(G) \le \betavl(G)$.

\end{theorem}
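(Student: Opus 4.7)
\begin{IEEEproof}[Proof proposal for Theorem~\ref{Shannonbound}]
The plan is to combine the characterization of $\betavl$ as an infimum of normalized min-ranks (Theorem~\ref{vectorlinear}) with the min-rank lower bound from Lemma~\ref{lemma:blowup-minrnk-gt-mais-product}, and then pass to the limit in the tensor exponent to recover the Shannon capacity. Throughout, we adopt the natural generalization $\Theta(G) = \sup_{m}\mais{\otimes^{m}G}^{1/m}$, which for undirected graphs reduces to the usual $\sup_{m}\alpha(\otimes^{m}G)^{1/m}$ since $\mais{\cdot}=\alpha(\cdot)$ on undirected graphs (cf.\ the Remark following Def.~\ref{def:digraph-k-blowup}).

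First I would fix an arbitrary field $\mathbb{F}$ and show the slightly stronger statement $\Theta(G)\le\betavl^{\mathbb{F}}(G)$, from which the theorem follows by taking the infimum over $\mathbb{F}$. By Theorem~\ref{vectorlinear},
\begin{align}
   \betavl^{\mathbb{F}}(G) \;=\; \inf_{k\in\mathbb{N}_{+}} \frac{\minrank_{\mathbb{F}}\bigl(\blowup{G}{k}\bigr)}{k}. \nonumber
\end{align}
Lemma~\ref{lemma:blowup-minrnk-gt-mais-product} tells us that for every $k,m\in\mathbb{N}_{+}$,
\begin{align}
   \frac{\minrank_{\mathbb{F}}\bigl(\blowup{G}{k}\bigr)}{k} \;\ge\; \mais{\otimes^{m}G}^{1/m}. \nonumber
\end{align}
The right-hand side does not depend on $k$, so taking the infimum over $k$ on the left yields $\betavl^{\mathbb{F}}(G)\ge\mais{\otimes^{m}G}^{1/m}$ for every $m$.

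Second, I would take the supremum over $m$ on the right-hand side. Since the inequality holds for every positive integer $m$ simultaneously,
\begin{align}
   \betavl^{\mathbb{F}}(G) \;\ge\; \sup_{m\in\mathbb{N}_{+}} \mais{\otimes^{m}G}^{1/m} \;=\; \Theta(G). \nonumber
\end{align}
Finally, taking the infimum over $\mathbb{F}$ of both sides gives $\Theta(G)\le\betavl(G)$.

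The main subtlety, which is not really an obstacle but deserves care, is the exchange-of-limits step: the bound from Lemma~\ref{lemma:blowup-minrnk-gt-mais-product} holds uniformly in $k$ for each fixed $m$, so one may legitimately take the infimum over $k$ first and then the supremum over $m$. Everything else is a direct chaining of two previously established inequalities together with the definition of $\Theta(G)$; in particular, the submultiplicativity of $\minrank$ and the blow-up identities for MAIS are already packaged inside Lemma~\ref{lemma:blowup-minrnk-gt-mais-product}, so no further combinatorial work is needed.
\end{IEEEproof}
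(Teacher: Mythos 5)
Your proof is correct, and it takes a genuinely different route from the paper's own proof of Theorem~\ref{Shannonbound}. The paper's argument first establishes $\alpha(\otimes^m \blowup{G}{k}) \ge k^m \alpha(\otimes^m G)$ via Corollary~\ref{corollary:indep-set-mais-inequalities} and Lemma~\ref{alphalemma}, takes $m$-th roots and limits to deduce $\Theta(G) \le \tfrac{1}{k}\Theta(\blowup{G}{k})$, and then invokes the Haemers-type bound $\Theta(H) \le \minrank_{\mathbb{F}}(H)$ applied to $H = \blowup{G}{k}$ before passing to the infimum over $k$ via Theorem~\ref{vectorlinear}. You instead plug Lemma~\ref{lemma:blowup-minrnk-gt-mais-product} straight into Theorem~\ref{vectorlinear}: since $\tfrac{1}{k}\minrank_{\mathbb{F}}(\blowup{G}{k}) \ge \mais{\otimes^m G}^{1/m}$ uniformly in $k$, the infimum over $k$ gives $\betavl^{\mathbb{F}}(G) \ge \mais{\otimes^m G}^{1/m}$ for every $m$, and then the supremum over $m$ dominates $\Theta(G)$. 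This is essentially the chain the paper reserves for Theorem~\ref{thm:shannoncapbound}(ii), promoted directly to a proof of Theorem~\ref{Shannonbound}. Your route is a bit shorter because Lemma~\ref{lemma:blowup-minrnk-gt-mais-product} has already packaged the submultiplicativity of $\minrank$, the $\minrank \ge \mais{}$ lower bound, and the blowup identities; the paper's route is more self-contained in that it derives the blowup-monotonicity of $\Theta$ from scratch and then leans on the classical Haemers inequality. One small caveat about definitions: you redefine $\Theta(G) = \sup_m \mais{\otimes^m G}^{1/m}$, while the paper's proof uses $\Theta(G) = \lim_m \alpha(\otimes^m G)^{1/m}$; since $\alpha(\cdot) \le \mais{\cdot}$, your quantity dominates the paper's, so the stated theorem follows regardless, and in fact you prove a slightly stronger inequality. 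The exchange-of-limits step you flag is handled exactly as you say, since the bound from Lemma~\ref{lemma:blowup-minrnk-gt-mais-product} holds for every $(k,m)$ pair simultaneously.
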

\begin{proof}
      For any $k, m \in \mathbb{N}_{+}$,
      \begin{align}
         \alpha \left( \otimes^m \blowup{G}{k}  \right) 
        \;\stackrel{\mathrm{Cor.}~\ref{corollary:indep-set-mais-inequalities}}{\ge} \;
        \alpha \left( \blowup{(\otimes^m G)}{k^m} \right)
         \;\stackrel{\mathrm{Lem.}~\ref{alphalemma}}{=}\;
         k^m \alpha \left( \otimes^m G\right).
	 \nonumber
       \end{align}   
    Hence, by the definition of Shannon capacity,
     \begin{align}
	 \Theta \left(G \right)
	 &=
	    \lim_{m \rightarrow \infty} \alpha(\otimes^m G)^{1/m}  \le
	    \lim_{m \rightarrow \infty} \frac{1}{k}\alpha(\otimes^m \blowup{G}{k})^{1/m} \nonumber \\
	 &=
	    \frac{1}{k} \Theta \left( \blowup{G}{k} \right)
	 \le \frac{1}{k}\minrank_{\mathbb{F}} \left( \blowup{G}{k} \right), 
	 \qquad {\forall k \in \mathbb{N}_{+}}.
	 \nonumber
      \end{align}
      The desired result follows from Theorem~\ref{vectorlinear}.
\end{proof}
\begin{theorem}\label{thm:shannoncapbound}
   For any graph $G$ and $m \in \mathbb{N}_{+}$,
   \begin{enumerate}
   \item    $
		\alpha(\otimes^m G)^{1/m}
	 \le \Theta(G)
	 \le \betavl(G).
      $
   \item
      $
	 \mais{\otimes^m G}^{1/m}
	 \le \betavl(G).
      $
   \end{enumerate}
\end{theorem}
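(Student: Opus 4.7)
The plan is to derive both parts by chaining the lemmas already developed in the appendix; essentially all the hard work has been pushed into the auxiliary results, and the theorem itself should drop out as a two-line consequence.

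Part (i) follows from two classical observations, both already in hand. First, $\alpha(\cdot)$ is supermultiplicative under strong products, so $\alpha(\otimes^{m} G)^{1/m}$ forms a nondecreasing (in the appropriate subsequential sense) sequence whose supremum is, by definition, $\Theta(G)$; hence every finite term lies below $\Theta(G)$. Second, the inequality $\Theta(G) \le \betavl(G)$ is exactly Theorem~\ref{Shannonbound}, proved just above. Concatenating these gives the claimed chain.

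For part (ii), I would fix an arbitrary field $\mathbb{F}$ and invoke Lemma~\ref{lemma:blowup-minrnk-gt-mais-product}, which gives
\begin{align}
\tfrac{1}{k}\minrank_{\mathbb{F}}\bigl(\blowup{G}{k}\bigr) \;\ge\; \mais{\otimes^{m} G}^{1/m} \nonumber
\end{align}
for every $k \in \mathbb{N}_{+}$. Since the right-hand side does not depend on $k$, taking the infimum over $k$ on the left and applying Theorem~\ref{vectorlinear} yields $\betavl^{\mathbb{F}}(G) \ge \mais{\otimes^{m} G}^{1/m}$. This bound is uniform in $\mathbb{F}$, so taking the infimum over fields gives $\betavl(G) \ge \mais{\otimes^{m} G}^{1/m}$, as claimed.

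The only nontrivial step is the invocation of Lemma~\ref{lemma:blowup-minrnk-gt-mais-product}, which bundles together the submultiplicativity of $\minrank_{\mathbb{F}}$ under strong products, the inequality $\minrank_{\mathbb{F}}(H) \ge \mais{H}$, the spanning-subgraph relation between $\otimes^{m}\blowup{G}{k}$ and $\blowup{\otimes^{m} G}{k^{m}}$ from Corollary~\ref{corollary:indep-set-mais-inequalities}, and the blowup identity $\mais{\blowup{H}{k^{m}}} = k^{m}\,\mais{H}$ from Lemma~\ref{alphalemma}. Once those four pieces are in place, the theorem is immediate, and no additional ideas are needed beyond passing to the infimum through Theorem~\ref{vectorlinear}.
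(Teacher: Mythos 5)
Your proposal is correct and follows essentially the same route as the paper: Part (ii) is obtained by invoking Lemma~\ref{lemma:blowup-minrnk-gt-mais-product} for every $k$ and every field, then passing through Theorem~\ref{vectorlinear} and taking the infimum over fields; Part (i) is the definitional inequality $\alpha(\otimes^m G)^{1/m}\le\Theta(G)$ chained with Theorem~\ref{Shannonbound}. The only cosmetic difference is that the paper justifies Part (i) by mentioning $\alpha(\otimes^m G)\le\mais{\otimes^m G}$ together with the definition of $\Theta$, whereas you invoke supermultiplicativity of $\alpha$ under strong products to identify the limit defining $\Theta$ with a supremum; both make the same point.
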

\begin{proof}
   For the second part, by Theorem~\ref{vectorlinear}, 
   and Lemma~\ref{lemma:blowup-minrnk-gt-mais-product},
   $\betavl^{\mathbb{F}} \ge \mais{\otimes^m G}^{1/m}$, for any field $\mathbb{F}$.
   The desired result follows taking infimum over all fields.
   The first part follows from the fact that $\alpha(\otimes^m G) \le \mais{\otimes^m G}$, in conjunction with the definition of $\Theta(G)$
   and Theorem~\ref{Shannonbound}.
\end{proof}

\section{Old Approximation}

\pagenumbering{arabic}

\end{document}